\sloppy \pagestyle{plain}\binoppenalty=10000 \relpenalty=10000
\newtheorem{theorem}{Theorem}[section]
\newtheorem{lemma}{Lemma}[section]
\newtheorem{corollary}{Corollary}[section]
\newcommand{\Aut}{\mathop{\sf Aut}\nolimits}
\newcommand{\tr}{\mathop{\sf tr}\nolimits}
\begin{document}

\begin{center}

\hfill FIAN/TD-29/12\\
\hfill ITEP/TH-55/12\\
\hfill IIP-TH-30/12

\vspace{1cm}

\end{center}

\centerline{\large\bf Asymptotic Hurwitz numbers}

\vspace{1cm}

\centerline {A.Mironov, A.Morozov, S.Natanzon}
\address{\begin{flushleft}Theory Department, Lebedev Physical Institute, Moscow, Russia;\end{flushleft}
\begin{flushleft}Institute for Theoretical and Experimental Physics, Moscow, Russia;\end{flushleft}
IIP, Federal University of Rio Grande do Norte, Natal, Brazil}
\email{mironov@itep.ru; mironov@lpi.ru}
\address{\begin{flushleft}Institute for Theoretical and Experimental Physics, Moscow, Russia;\end{flushleft} IIP,
Federal University of Rio Grande do Norte, Natal, Brazil}
\email{morozov@itep.ru}
\address {\begin{flushleft}Department of Mathematics, Higher School of Economics, Moscow, Russia;\end{flushleft}
\begin{flushleft}A.N.Belozersky Institute, Moscow State University, Russia;\end{flushleft} Institute
for Theoretical and Experimental Physics, Moscow, Russia} \email{natanzons@mail.ru}

\vspace{1cm}

\centerline{ABSTRACT}

\bigskip

{\footnotesize
The classical Hurwitz numbers of degree $n$ together with the Hurwitz numbers of the seamed surfaces of degree $n$
give rise to the Klein topological field theory \cite{AN2}. We extend this construction to the Hurwitz numbers of all
degrees at once. The corresponding Cardy-Frobenius algebra is induced by arbitrary Young diagrams and arbitrary
bipartite graphs. It turns out to be isomorphic to the algebra of differential operators from
\cite{MMN3} which serves a model for open-closed string theory. The operator associated with the Young diagram
of the transposition of two elements
coincides with the cut-and-join operator which gives rise to relations for the classical Hurwitz numbers.
We prove that the operators corresponding to arbitrary Young diagrams and bipartite graphs also give rise to
relations for the Hurwitz numbers.
 }

\tableofcontents

\section{Introduction}

In accordance with \cite{AN2}, the classical Hurwitz numbers of degree $n$ along with the Hurwitz numbers
of the seamed surfaces
\footnote{The seamed surfaces were accurately defined in \cite{Roz}. They are also called world-sheet foam \cite{KR}.}
of degree $n$ gives rise to an open-closed (and even Klein in the meaning of \cite{AN})
topological field theory. The corresponding Cardy-Frobenius algebra consists of the algebra of Young diagrams of degree $n$,
of the algebra of bipartite graphs of degree $n$ and a homomorphism of the first algebra to the second one.
It can be considered as a realization of open-closed string theory \cite{MMN4}.

In this paper, we extend this construction to the algebra of the Hurwitz numbers of all degrees. We prove that
this universal Hurwitz algebra is associated with an infinite-dimensional topological field theory, and the corresponding
Cardy-Frobenius algebra is isomorphic to the Cardy-Frobenius algebra of the differential operators constructed in
\cite{MMN3}. This algebra is induced by the differential operators associated with arbitrary Young diagrams and
bipartite graphs. The operator associated with the Young diagram
of the transposition of two elements coincides with the cut-and-join operator generating
a differential relation for the generating function of the classical Hurwitz numbers \cite{GJ1}. The operators
associated with the simplest bipartite graphs are found in \cite{N3}. We prove that all the operators associated
with Young diagrams and bipartite graphs give rise to analogous cut-and-join differential relations.

\vspace{2ex}

\paragraph{{\bf Acknowledgements.}}

S.N. is grateful to MPIM and IHES for the kind hospitality and support.

Our work is partly supported by Ministry of Education and Science of
the Russian Federation under contract 8498,
by Russian Federation Government Grant No. 2010-220-01-077, ag.no.11.G34.31.0005,
by NSh-3349.2012.2 (A.Mir. and A.Mor.) and 8462.2010.1 (S.N.),
by CNPq 400635/2012-7, the Brazil National Counsel of Scientific and
Technological Development (A.Mor.), by the program  of UFRN-MCTI (Brazil) (A.Mir.),
by RFBR grants 10-01-00536 (A.Mir. and A.Mor.), 10-01-00678(A.N.) and
by joint grants 11-02-90453-Ukr, 12-02-92108-Yaf-a,
11-01-92612-Royal Society.

\vspace{2ex}

\section{Hurwitz numbers of seamed surfaces}
\vspace{2ex}

\subsection{Seamed surfaces}

We name {\it seamed graph} a one-dimensional topological space
$\Delta$ without boundaries with a finite set of marked points
$\Delta_v\in\Delta$ called {\it vertices} such that the set
$\Delta-\Delta_v$ is a disjoint set of intervals.
These intervals are called {\it edges} of the graph.
One requires that any connected component of the graph is not a point.

{\it Morphism} of a seamed graph $\Delta^1$ into a seamed graph $\Delta^2$
is a continuous map $f:\Delta^1\to\Delta^2$ which is a local homomorphism on the edges of the graph,
such that $f^{-1}(\Delta^2_v)=\Delta^1_v$.

\vspace{1ex}

We mean by \textit{surface} a compact surface, maybe with the boundary and/or non-oriented.

{\it Seamed surface} \cite{AN2} is the set
$(\Omega,\Delta,\varphi)$, where $\Omega$ is a disjoint set of surfaces,
$\Delta$ is a seamed graph and
$\varphi:\partial\Omega\to\Delta$ is an epimorphic morphism of the seamed graphs such that
the restriction
$\varphi$ onto every boundary contour $c\subset\partial\Omega$ is a covering over $\varphi(c)$, and
the below condition $(*)$ is fulfilled.

Identifying the points from $\partial\Omega$ that have same images in $\Delta$, one glues of $\Omega$
the stratified topological space $\Omega_\varphi$, its open two-dimensional strata being homeomorphic to
the two-dimensional strata of $\Omega$, its one-dimensional strata coinciding with the edges of the graph$\Delta$
and its zero-dimensional strata being the vertices of $\Delta$ and the internal special points
of the surface $\Omega$. It is necessary that

{\ }

(*) the punctured vicinity of every vertex of the graph $\Delta$ in the space
$\Omega_\varphi$ is connected.

{\ }

{\it Isomorphism of the seamed surfaces}
$(\Omega, \Delta,\varphi)$ and $(\Omega',\Delta',\varphi')$ is called the homeomorphism
$f:\Omega_\varphi\to\Omega'_{\varphi'}$ inducing the isomorphism of the seamed graphs
$\Delta$ and $\Delta'$. The isomorphism of the seamed surface onto itself is called {\it automorphism}.

\subsection{Covering surfaces with seamed surfaces}

{\it Covering of degree $n$ of the surface} $\Omega$ with use of the seamed surface $(\tilde\Omega,\tilde\Delta,\tilde\varphi)$
is defined to be a continuous map
$f:\tilde\Omega_{\tilde\varphi}\rightarrow\Omega$ such that:

1) $f(\tilde\Omega_{\tilde\varphi})=\Omega$;

2) $f$ maps the seamed graph onto the boundary $\partial\Omega$ of the surface $\Omega$ giving rise to a local
homeomorphism of edges of the seamed graph;

3) there exists only a finite set $\Omega_a$ of points $p\in\Omega\setminus\partial\Omega$ such that the number of their pre-images
is not $n$.

Points of the set $\Omega_a$ are called \textit{internal critical values}. The image $f^{-1}(u)$ of a small contour $u$
surrounding the point $p\in\tilde\Omega_a$ decomposes into simple contours $\tilde u_1,...,\tilde u_m$.

\textit{Topological type} of the internal critical value $p$ of the covering
$f$ is the non-ordered set of numbers
$\alpha=(n_1,...,n_m)$, $n_i$ being the degree of restriction of the map $f$ onto $\tilde u_i$. The sum of all
$n_i$ is equal to $n$. Hence, one may treat $\alpha$ as the Young diagram of degree $n$. The group of automorphisms
$\Aut(\alpha)$ consists of auto-homeomorphisms of the set $\tilde u_1\cup...\cup\tilde u_m$ which are permutable with the
covering $f$.

\textit{From now on, we assume that the boundary of the surface $\Omega$ is oriented.}

Determine the topological type of the boundary point $b\in\partial\Omega$. Consider a small segment
$l\subset\Omega$ surrounding the critical value $q$, i.e. the segment entirely except for
the end-points consisting of the internal
points of the surface and isolating from $\Omega$ a disk containing the point $q$.

The boundary orientation orders the segment end-points $v_1$ and $v_2$. The segment pre-image
$f^{-1}(v)$ forms \textit{a bipartite graph of degree $n$}, i.e. a graph with vertices divided into two sets
ordered in accordance with orientation of the boundary $\partial\Omega$:
$V_1=f^{-1}(v_1)$ and $V_1=f^{-1}(v_2)$, and  with $n$ edges $E$ connecting vertices from the different sets.

The class of topological equivalence of the bipartite graph $(V_1,E,V_2)$ is called
\textit{topological type of the boundary value $q$}. The values of $q$ are \textit{critical} if at least one of the connected
components of the graph $(V_1,E,V_2)$ has more than 2 vertices. The set of the boundary critical values is contained in the set
$\Omega_b$ of pre-images of vertices of the seamed graph.

{\it Isomorphism} of the coverings
$f:\tilde\Omega_{\tilde\varphi}\rightarrow\Omega$ and
$f':\tilde\Omega'_{\tilde\varphi'}\rightarrow\Omega$ of the seamed surfaces
$(\tilde\Omega,\tilde\Delta,\tilde\varphi)$ and
$(\tilde\Omega',\tilde\Delta',\tilde\varphi')$ is the isomorphism of the seamed surfaces
$F:\Omega_\varphi\to\Omega'_{\varphi'}$ such that
$f'F=f$. If $(\tilde\Omega,\tilde\Delta,\tilde\varphi)=$
$(\tilde\Omega',\tilde\Delta',\tilde\varphi')$, then the isomorphism is called {\it automorphism}.
Automorphisms of the covering $f$ forms a group $\Aut(f)$. The groups of automorphisms
of isomorphic coverings are isomorphic.

\vspace{2ex}

\subsection{Hurwitz numbers}

Denote through $\mathcal A_n$ the set of Young diagrams $\Delta$ of degree $|\Delta|=n$. Denote through
$A_n$ the vector space induced by $\mathcal A_n$. Denote through $\mathcal B_n$ the set of isomorphism classes of
the bipartite graphs of degree $|(V_1,E,V_2)|=|E|=n$. Denote through
$B_n$ the vector space induced by $\mathcal B_n$.

Consider the triple $(\Omega,\Omega_a,\Omega_b)$ consisting of the surface $\Omega$,
of the finite set of its internal points $\Omega_a$ and of the finite set of its boundary points $\Omega_b$.
Put $V_{\Omega}=(\bigotimes\limits_{p\in\Omega_a}A_p) \bigotimes(\bigotimes\limits_{q\in\Omega_b}B_q)$,
$A_p$ and $B_q$ being copies of the vector spaces $A_n$ and $B_n$ accordingly.

\vspace{1ex}

Consider the maps
$\alpha :\Omega_a\to\mathcal A_n$, $\beta :\Omega_b\to
\mathcal B_n$ and put $\alpha_p=\alpha(p)$, $\beta_q=
\beta(q)$. Denote through $Cov(\Omega,\alpha,
\beta)$ the set of isomorphism classes of covering of the surface
$\Omega$ by the seamed surfaces $(\tilde\Omega,
\tilde\Delta,\tilde\varphi)$, with the critical values contained in
$\Omega_a\cup\Omega_b$ and the local invariants $\alpha_p\in\mathcal A_n$, $\beta_q\in\mathcal B_n$ at the points
$p\in\Omega_a$, $q\in\Omega_b$.

Following \cite{AN} we call the number
$$H(\Omega,\alpha,\beta) =
\sum_{[f]\in Cov(\Omega,\alpha,\beta)}
1/|\Aut([f])|.$$
{\it Hurwitz number of degree $n$}. For  $\Omega_b=\emptyset$ these numbers coincide
with the standard Hurwitz numbers \cite{H}, \cite{D}.

The Hurwitz numbers give rise to the family of linear functionals
$\mathcal {H}= \{\Phi_{\Omega}:
V_\Omega\to \mathbb{R}\}$, which (in accordance with \cite{AN1},\cite{AN2}) do not depend on the orientation of the
boundary $\partial\Omega$ and form a Klein topological field theory in the meaning of \cite{AN}.

\vspace{2ex}

\subsection{Universal and asymptotic Hurwitz numbers}

We call the Young diagram $\tilde{\Delta}$ of degree $n$ obtained from the Young diagrams $\Delta$ of degree $m\leq n$ by
adding $n-m$ lines of unit length \textit{standard extension of degree $n$} of the Young diagram $\Delta$.
Put  $\rho_n^A(\Delta)= \frac{|\Aut(\tilde{\Delta})|}{|\Aut(\Delta)||\Aut(\tilde{\Delta}
\setminus\Delta)|}\tilde{\Delta}$.
The correspondence $\Delta\mapsto\rho_n^A(\Delta)$ gives rise to the homomorphism of vector spaces
$\rho_n^A: A_m\rightarrow A_n$.

\vspace{1ex}

We call the graph with all connected components having two vertices \textit{simple}.
The graph $\tilde{\Gamma}$ of degree $n$ is called \textit{standard extension of degree $n$}
of the graph $\Gamma$ of degree $m\leq n$ if the complement $\tilde{\Gamma}\setminus\Gamma$
consists of connected components of the graph $\tilde{\Gamma}$ and forms a simple graph.

Denote though $\mathcal{E}_n(\Gamma)$ the set of all standard extensions of degree $n$
of the graph $\Gamma$. Put  $\rho_n^B(\Gamma)= \sum\limits_{\tilde{\Gamma}\in\mathcal{E}_n(\Gamma)}
\frac{|\Aut(\tilde{\Gamma})|}{|\Aut(\Gamma)||\Aut(\tilde{\Gamma}
\setminus\Gamma)|}\tilde{\Gamma}$.
The correspondence $\Gamma\mapsto\rho_n^B(\Gamma)$ gives rise to the homomorphism of vector spaces
$\rho_n^B: B_m\rightarrow B_n$.

\vspace{1ex}

\textit{Free Hurwitz number of degree $n$} is the number
$$H^{fr}_n(\Omega,\{\alpha_p\},\{\beta_q\}) = H(\Omega,\{\rho_n^A(\alpha_p)\}, \{\rho_n^B(\beta_q)\})$$

\vspace{1ex}

The free Hurwitz number $H^{fr}_n(\Omega,\{\alpha_p\},\{\beta_q\})$ whose degree is equal to the maximum of degrees of the
Young diagrams $\{\alpha_p\}$ and bipartite graphes $\{\beta_q\}$ is called \textit{universal Hurwitz number}.

The infinite set of the free Hurwitz numbers
$$H^{as}(\Omega,\{\alpha_p\},\{\beta_q\}) = (H^{fr}_1(\Omega,\{\alpha_p\},\{\beta_q\}),
H^{fr}_2(\Omega,\{\alpha_p\},\{\beta_q\}),\dots)$$
is called \textit{asymptotic Hurwitz number}.

\vspace{2ex}

\section{Algebra of asymptotic Hurwitz numbers}\label{s3}

\vspace{2ex}

\subsection{Algebra of Hurwitz numbers of sphere}\label{s3.1}

Consider a sphere $S$ with two marked points $S_a=\{p_1,p_2\}$.
Then, the equality
$$(\alpha_1,\alpha_2)_{A}=H(S,\{\alpha_1,\alpha_2\})=
\frac{\delta_{\alpha_1,\alpha_2}}{|\Aut(\alpha_1)|},\ \ \ \ \ \alpha_1,\alpha_2\in\mathcal{A}_n$$
gives rise to a non-degenerated symmetric bilinear form $(.,.)_{A}:A_n\times A_n\rightarrow\mathbb{C}$.

Consider a sphere $S$ with three marked points $S_a=\{p_1,p_2,p_3\}$. Then, the equality
$$(\alpha_1\circ \alpha_2,\alpha_3)_{A}= H(S,\{\alpha_1,\alpha_2,\alpha_2\}),\ \ \ \ \
\alpha_1,\alpha_2,\alpha_3\in\mathcal{A}_n$$
gives rise to a binary operation $*:A_n\times A_n\rightarrow A_n$ which
makes of $A_n$ \textit{a commutative Frobenius algebra} \cite{D}.

The diagram $\mathfrak{e}_n^A\in\mathcal{A}_n$ with the lines of unit lengths is the identity element of this algebra.
Besides, $(\alpha_1,\alpha_2)_{A}=l_{A}(\alpha_1\circ\alpha_2)$, where $l_{A}:A_n\rightarrow\mathbb{C}$ is a linear
functional equal to $\frac{1}{n!}$ on $\mathfrak{e}_n^A$ and vanishing on all other Young diagrams.

\vspace{1ex}

In accordance with \cite{D}, the Hurwitz number corresponding to sphere $S$ is given by
$$H(S,\{\alpha_1,\dots,\alpha_k\})= l_{A}(\alpha_1*\dots*\alpha_k
), \ \ \alpha_1,\dots,\alpha_k\in\mathcal{A}_n.$$

\vspace{2ex}

A counterpart of the algebra $A_n$ in the infinite-dimensional case is the Ivanov-Kerov algebra $A_\infty$.
This algebra is the center of the semi-group algebra of the semi-group
$D_\infty$ (\cite{IK}). The semi-group $D_\infty$ consists of pairs $(d,\sigma)$, where $d$
is a set of the natural numbers $\mathbb{N}$ and $\sigma:d\rightarrow d$ is a permutation.
Multiplication is given by the formula
$(d_1,\sigma_1)(d_2,\sigma_2)=(d_1\cup d_2,\sigma_{1}\sigma_{2})$.

The conjugation class of the permutation
$(d,\sigma)$ in the group $S_{\infty}$ of finite permutations of the natural numbers $\mathbb{N}$ is described by
the Young diagram $[d,\sigma]$ of degree $|d|$. Associate with the Young diagram $\Delta$ the orbit
$[\Delta]=\{(d,\sigma)|[d,\sigma]=\Delta\}$.
The sums  $\sum\limits_{(d,\sigma)\in[\Delta]}(d,\sigma)$, where $\Delta\in\mathcal{A}= \bigcup\limits_{n=1}^{\infty}\mathcal{A}_n$
gives rise to the center $A_\infty$ of the semi-group algebra of the semi-group $D_\infty$.
Denote through $\circ$ the multiplication in $A_\infty$.
\vspace{1ex}

Introduce a multiplication $\Delta_1*_n\Delta_2= \rho^A_{n}(\Delta_1) *\rho^A_{n}(\Delta_2)$ between two Young diagrams
$\Delta_1,\Delta_2$ of degree not higher than $n$. Then, in accordance with \cite{MMN2}

$\Delta_1\circ\Delta_2=\sum\limits_ {n=\max\{|\Delta_1|,|\Delta_2|\}} ^{\infty}\{\Delta_1\Delta_2\}_{n}$,
where
$\{\Delta_1\Delta_2\}_n = \Delta_1*_n
\Delta_2$ at $n=\max\{|\Delta_1|,|\Delta_2|\}$
and
\noindent
$\{\Delta_1\Delta_2\}_n = \Delta_1*_n
\Delta_2-\sum\limits_ {k=\max\{|\Delta_1|,|\Delta_2|\}}^{n-1}
\rho^A_n(\{\Delta_1\Delta_2\}_k)$ at $n>\max\{|\Delta_1|,|\Delta_2|\}$.

At the same time, $\{\Delta_1\Delta_2\}_n =0$ at
$n>|\Delta_1|+|\Delta_2|$.

\vspace{2ex}

Let us give on the set of sequences $A^{as}=\{(a^{1},a^{2},\dots)|a^{n}\in A_n\}$,
the binary operation $(a^{1}_1,a^{2}_1,\dots)*(a^{1}_2,a^{2}_2,\dots)=
(a^{1}_1* a^{1}_2,a^{2}_1* a^{2}_2,\dots)$.

Associate with the element $a\in A_n$ the sequence of numbers $a^{as}= (a^{1},a^{2},\dots)\in A^{as}$, where
$a^{i}=0$ at $i<n$ and $a^{i}=\rho^A_{i}(a)$ at $i\geq n$.

\begin{theorem} \label{t3.1} The correspondence $a\mapsto a^{as}$ gives rise to the monomorphism
of algebras $\rho^A_{\uparrow}: A_\infty\rightarrow A^{as}$.
\end{theorem}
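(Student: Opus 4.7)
The plan is to extend $\rho^A_\uparrow$ linearly from the basis $\{C_\Delta=\sum_{(d,\sigma)\in[\Delta]}(d,\sigma)\}_{\Delta\in\mathcal{A}}$ of $A_\infty$ and then to verify the homomorphism and injectivity claims separately. Since both $\circ$ on $A_\infty$ and the componentwise $*$ on $A^{as}$ are bilinear, it suffices to check the homomorphism property on two basis elements $C_{\Delta_1},C_{\Delta_2}$.

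The homomorphism step reduces to a componentwise comparison. By the definition of $a^{as}$, the $n$-th component of $C_{\Delta_1}^{as}*C_{\Delta_2}^{as}$ is $\rho^A_n(\Delta_1)*\rho^A_n(\Delta_2)=\Delta_1*_n\Delta_2$ for $n\geq\max\{|\Delta_1|,|\Delta_2|\}$, and vanishes for smaller $n$. On the other side, the formula of \cite{MMN2} recalled above expresses $C_{\Delta_1}\circ C_{\Delta_2}=\sum_k\{\Delta_1\Delta_2\}_k$ as a finite sum (the cutoff $k\leq|\Delta_1|+|\Delta_2|$ guarantees finiteness), so the $n$-th component of $(C_{\Delta_1}\circ C_{\Delta_2})^{as}$ equals $\sum_{k\leq n}\rho^A_n(\{\Delta_1\Delta_2\}_k)$. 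The recursive definition of $\{\Delta_1\Delta_2\}_n$, combined with the fact that $\rho^A_n$ acts as the identity on $A_n$, rearranges into
\[
\Delta_1*_n\Delta_2=\sum_{k=\max\{|\Delta_1|,|\Delta_2|\}}^{n}\rho^A_n(\{\Delta_1\Delta_2\}_k),
\]
so the two sequences coincide componentwise and $\rho^A_\uparrow$ is a ring map.

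For injectivity I would take a finite combination $a=\sum_\Delta c_\Delta C_\Delta\in A_\infty$ with $a^{as}=0$ and argue by inspecting the minimal degree $m_0=\min\{|\Delta|:c_\Delta\neq 0\}$. In the $m_0$-th component every basis vector $C_\Delta$ with $|\Delta|>m_0$ contributes $0$ by the defining convention $(C_\Delta)^{as}_i=0$ for $i<|\Delta|$, while for $|\Delta|=m_0$ the map $\rho^A_{m_0}$ acts trivially, giving $a^{as}_{m_0}=\sum_{|\Delta|=m_0}c_\Delta\,\Delta$; its vanishing forces $c_\Delta=0$ for every $\Delta$ of degree $m_0$, contradicting the choice of $m_0$.

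The main point that requires care is the rearrangement identity displayed above: although it is essentially book-keeping, one must track the cutoffs $\max\{|\Delta_1|,|\Delta_2|\}\leq k\leq|\Delta_1|+|\Delta_2|$ on the support of $\{\Delta_1\Delta_2\}_k$ consistently and use that $\rho^A_n$ restricts to the identity on degree-$n$ elements in order to include the $k=n$ term on the right-hand side. Once this is in hand, both the homomorphism property and the injectivity argument follow directly from the definitions.
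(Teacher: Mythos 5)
Your proof is correct and follows essentially the same route as the paper's: reduce to the Young--diagram basis and use the rearrangement of the recursive definition into $\Delta_1*_n\Delta_2=\sum_{k=\max\{|\Delta_1|,|\Delta_2|\}}^{n}\rho^A_n(\{\Delta_1\Delta_2\}_k)$, which is exactly the identity the paper invokes. Your explicit minimal-degree argument for injectivity is a sound addition; the paper asserts the monomorphism but leaves that part implicit.
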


\begin{proof} One suffices to check the statement of the theorem for the Young diagrams.
In this case, the claim follows from the identity $\Delta_1*_n
\Delta_2 = \{\Delta_1\Delta_2\}_n + \sum\limits_ {k=\max\{|\Delta_1|,|\Delta_2|\}}^{n-1}
\rho^A_n(\{\Delta_1\Delta_2\}_k)$
\end{proof}

Thus, the algebra $A^{as}$ is isomorphic to the algebraic closure of the algebra $A_\infty$, or, more precisely,
to the algebraic closure of the algebra $A_\uparrow=\rho_\uparrow(A_\infty)$.

Consider the linear operator $l_{A}^{as}: A^{as}\rightarrow\mathbb{C}^{\infty}= \{(c_1,c_2,\dots)|c_i\in\mathbb{C}\}$, where
$l_{A}^{as}(a^{1},a^{2},\dots)=
(l_{A}(a^{1}),l_{A}(a^{2}),\dots)$ and the bilinear operator $(a_1,a_2)_{A}: A^{as}\times A^{as}\rightarrow\mathbb{C}^{\infty}$
where $(a_1,a_2)_{A}=l_{A}^{as}(a_1* a_2)$.
Then, from theorem \ref{t3.1} follows

\begin{theorem} \label{t3.2} The multiplication in the algebra $A^{as}_{\uparrow}$ is determined by the equality
$$(a_1* a_2,a_3)_A=H^{as}(S,\{a_1,a_2,a_3\}).$$
Moreover,
$$H^{as}(S,\{a_1,\dots,a_k\})= l_{A}^{as}(a_1*\dots* a_k), \ \ a_1,\dots,a_k\in A^{as}_{\uparrow}$$
\end{theorem}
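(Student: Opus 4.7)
The plan is to reduce both assertions of the theorem to the finite-dimensional identities already proved in Section~\ref{s3.1}, level by level. Both sides of the claimed equalities live in $\mathbb{C}^\infty$, and the product $*$ on $A^{as}$ as well as the functional $l_A^{as}$ are defined componentwise, so the argument will split into independent checks at each $n$ and then reassemble.

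First, I would fix $a_1,\dots,a_k\in A^{as}_\uparrow$ and appeal to Theorem~\ref{t3.1} to write each one as $a_j=\rho^A_\uparrow(\hat a_j)$ for some $\hat a_j\in A_\infty$. By linearity it is enough to take each $\hat a_j$ to be the class of a single Young diagram of degree $m_j$. Then the $n$-th component of $a_j$ is $\rho^A_n(\hat a_j)\in A_n$ for all $n\ge m_j$, and zero otherwise. Because Theorem~\ref{t3.1} asserts that $\rho^A_\uparrow$ is a homomorphism of algebras, for every $n\ge\max_j m_j$ the $n$-th component of the product $a_1*\cdots*a_k$ equals the product $\rho^A_n(\hat a_1)*\cdots*\rho^A_n(\hat a_k)$ computed inside the finite-dimensional Frobenius algebra $A_n$.

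Next I would apply the sphere identity $H(S,\{\alpha_1,\dots,\alpha_k\})=l_A(\alpha_1*\cdots*\alpha_k)$ of Section~\ref{s3.1} to the tuple $(\rho^A_n(\hat a_1),\dots,\rho^A_n(\hat a_k))$ inside $A_n$. This gives
\[
 l_A\!\bigl(\rho^A_n(\hat a_1)*\cdots*\rho^A_n(\hat a_k)\bigr)
 = H\!\bigl(S,\{\rho^A_n(\hat a_j)\}\bigr)
 = H^{fr}_n\!\bigl(S,\{\hat a_j\}\bigr),
\]
by the definition of the free Hurwitz number. The left-hand side is the $n$-th entry of $l^{as}_A(a_1*\cdots*a_k)$ and the right-hand side is the $n$-th entry of $H^{as}(S,\{\hat a_j\})$, so packaging the equalities over all $n$ proves the second claim. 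The first claim then follows by specializing to $k=3$ and using the defining identity $(x,y)_A=l^{as}_A(x*y)$ together with componentwise associativity; the word \emph{determined} is justified by the nondegeneracy of $(\cdot,\cdot)_A$ at every finite level, inherited from the Frobenius structure on each $A_n$.

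The only genuinely structural point is that componentwise multiplication keeps $A^{as}_\uparrow$ invariant, so that the product of elements of $A^{as}_\uparrow$ is itself in $A^{as}_\uparrow$; this is exactly the content of Theorem~\ref{t3.1}. Beyond that the proof is bookkeeping between the definitions of $\rho^A_n$, $H^{fr}_n$, $H^{as}$, and $l^{as}_A$, so I expect no serious obstacle once the reduction to fixed-degree sphere Hurwitz numbers has been set up.
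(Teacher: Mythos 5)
Your argument is correct and is essentially the paper's own (the paper merely states that the theorem ``follows from Theorem~\ref{t3.1}''): you unwind the componentwise definitions of $*$ and $l_A^{as}$ on $A^{as}$, reduce by linearity to Young diagrams, and at each level $n$ invoke the degree-$n$ sphere identity $H(S,\{\alpha_1,\dots,\alpha_k\})=l_A(\alpha_1*\dots*\alpha_k)$ together with the definition of the free Hurwitz numbers. The only cosmetic remark is that the identification of the $n$-th component of $a_1*\dots*a_k$ with $\rho^A_n(\hat a_1)*\dots*\rho^A_n(\hat a_k)$ is immediate from the componentwise definition of $*$ on $A^{as}$; Theorem~\ref{t3.1} is needed only to know that $A^{as}_{\uparrow}$ is closed under this product, exactly as you note at the end.
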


\vspace{2ex}

Associate with the element $a\in A_\infty$ the sum $l_{A}^{\Sigma}(a)=\sum\limits_{n=0}^{\infty}l_{A}(a^n)$, where
$(a^{1},a^{2},\dots)=\rho_{\uparrow}^A(a)$

\begin{lemma}\label{l3.2} The sum $l_{A}^{\Sigma}(\rho_{\uparrow}(a))$ absolutely converges for any
$a\in A_{\infty}$, and $l_{A}^{\Sigma}\rho_{\uparrow}=el_{A}$.
\end{lemma}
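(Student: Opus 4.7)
The plan is to reduce the statement to the basis of Young diagrams and then carry out an elementary computation that, in the end, amounts to recognizing the series $\sum_{k\geq 0} 1/k! = e$. By linearity of $\rho_\uparrow^A$, $l_A^\Sigma$, and $l_A$, it suffices to verify both assertions when $a = \Delta$ is a single Young diagram of some degree $m$.

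The key observation is that $l_A$ on $A_n$ vanishes on every Young diagram except the identity $\mathfrak{e}_n^A$, where its value is $1/n!$. Consequently $l_A(\rho_n^A(\Delta))$ can be nonzero only if the standard extension $\tilde{\Delta}$ of $\Delta$ coincides with $\mathfrak{e}_n^A$; since $\tilde{\Delta}$ is obtained from $\Delta$ by appending $n-m$ parts of unit length, this forces $\Delta$ itself to be $\mathfrak{e}_m^A$. Therefore, if $\Delta$ has any part of size $>1$, every term in the defining sum $\sum_n l_A(a^n)$ is zero (so absolute convergence is trivial), and simultaneously $l_A(\Delta) = 0$ on the right-hand side of $l_A^\Sigma\rho_\uparrow = e\,l_A$.

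For the only nontrivial case $\Delta = \mathfrak{e}_m^A$, I would compute directly. Since $|\Aut(\mathfrak{e}_n^A)| = n!$ and $|\Aut(\tilde{\Delta}\setminus\Delta)| = (n-m)!$, the definition of $\rho_n^A$ gives
\[
\rho_n^A(\mathfrak{e}_m^A) \;=\; \frac{n!}{m!\,(n-m)!}\,\mathfrak{e}_n^A \;=\; \binom{n}{m}\mathfrak{e}_n^A,
\]
so $l_A(\rho_n^A(\mathfrak{e}_m^A)) = 1/\bigl(m!\,(n-m)!\bigr)$. Summing over $n\geq m$ (the components $a^n$ with $n<m$ are zero by the definition of $\rho_\uparrow^A$) produces
\[
l_A^\Sigma(\mathfrak{e}_m^A) \;=\; \frac{1}{m!}\sum_{k=0}^{\infty}\frac{1}{k!} \;=\; \frac{e}{m!} \;=\; e\cdot l_A(\mathfrak{e}_m^A),
\]
with absolute convergence furnished by the exponential series, since all terms are non-negative reals.

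No real obstacle is expected: the lemma is a direct unwinding of the definitions, and the constant $e$ in the statement has a transparent origin, namely that only the diagrams $\mathfrak{e}_m^A$ can see $l_A$ through $\rho_\uparrow^A$, and the resulting binomial weights collapse to the series for $e$. The only point that requires a moment's care is correctly identifying which Young diagrams admit a standard extension of identity type, but this is immediate from the definition.
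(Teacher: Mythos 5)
Your proposal is correct and follows essentially the same route as the paper: reduce by linearity to a single Young diagram, observe that $l_A$ only survives on the diagrams $\mathfrak{e}_m^A$ (since $\rho_n^A(\Delta)$ is a multiple of the standard extension $\tilde\Delta$, which equals $\mathfrak{e}_n^A$ only when $\Delta=\mathfrak{e}_m^A$), and then evaluate the binomial coefficients to get $\frac{1}{m!}\sum_{k\ge 0}\frac{1}{k!}=\frac{e}{m!}=e\,l_A(\mathfrak{e}_m^A)$, with absolute convergence immediate from the non-negative exponential series. This is exactly the computation in the paper's proof, merely written out in more detail.
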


\begin{proof} One suffices to check the statement of the lemma for the Young diagrams. In this case,
$l_{A}(\Delta)=0$ unless $\Delta=\mathfrak{e}_n^A$, when $l_{A}(\mathfrak{e}_n^A)=
\sum\limits_{k=n}^{\infty}\frac{k!}{k!n!(k-n)!}= \frac{1}{n!}\sum\limits_{k=n}^{\infty}\frac{1}{(k-n)!}=\frac{e}{n!}$.
\end{proof}

\vspace{2ex}

\subsection{Algebra of boundary Hurwitz numbers of disk}\label{s3.2}

\vspace{1ex}

We understand by \textit{boundary Hurwitz numbers} the Hurwitz numbers of coverings over the disk without internal critical values.

\vspace{1ex}

Denote through $\star:B_n\rightarrow B_n$ the involution induced by changing the
orientation of the graphs: $\star(V_1,E,V_2)=(V_1,E,V_2)^{\star}=(V_2,E,V_1)$.

Consider a disk $D$ with two marked points $D_b=\{q_1,q_2\}$.
Then, the equality
$$(\beta_1,\beta_2)_{B}=H(D,\{\beta_1,\beta_2\})=
\frac{\delta_{\beta_1,\beta_2^{\star}}}{|\Aut(\beta_1)|},\ \ \ \ \ \beta_1,\beta_2\in\mathcal{B}_n$$
gives rise to a non-degenerated symmetric bilinear form $(.,.)_{B}:B_n\times B_n\rightarrow\mathbb{C}$.

Consider a disk $D$ with three marked points $D_b=\{q_1,q_2,q_3\}$. Then, the equality
$$(\beta_1*\beta_2,\beta_3)_{B}= H(D,\{\beta_1,\beta_2,\beta_2\}),\ \ \ \ \ \beta_1,\beta_2,\beta_3\in\mathcal{B}_n$$
gives rise to a binary operation $B_n\times B_n\rightarrow B_n$ which makes of $B_n$ a generically
\textit{non-commutative Frobenius algebra} \cite{AN2}.

\vspace{1ex}

Following \cite{AN1},\cite{AN2}, describe the algebra $B_n$ in terms of graphs.
It is induced by the set of bipartite graphs $\mathcal{B}_n$ of degree $n$.
Multiplication is defined as follows. Let $(V_1,E,V_2)$ и $(V'_1,E',V'_2)$ be a pair of bipartite graphs
with $n$ edges. Denote through $Hom(V_2,V'_1)$ the set of maps $\chi:V_2\rightarrow V'_1$
which preserve the valency of vertices. Associate with each map the bipartite graph
$(V_2,E_\chi,V'_1)$ with the edges connecting only the vertices $v$ and $\chi(v)$, where $v\in V_2$,
the number of edges connecting $v$ and $\chi(v)$ being equal to the valency of the vertex $v$.

Call the subset $F\subset E\times E'$ consistent with $\chi$, if the restriction
of the natural projections $E\times E'\rightarrow
E$, $E\times E'\rightarrow E'$
onto $F$ are in one-to-one correspondence and
$\chi(V_2(e))=V_1(e')$ for any $(e,e')\in F$. Denote through
$M_\chi$ the set of such $F$'s. Associate with the subset $F\in
M_\chi$ the bipartite graph $(V_1,\overline{F},V'_2)$, its edges being pairs of edges $(e,e')\in F$ glued together
at the points
$V_2(e)$ and $V_1(e')$. Denote through
$\Aut_F(V_1,\overline{F},V'_2)\subset\Aut(V_1,\overline{F},V'_2)$ the subgroup consisting of the automorphisms
inducing on the set $E$ the automorphism of the graph
$(V_1,E,V_2)$.

Now construct the map $\mathcal B_n\times\mathcal
B_n\rightarrow B_n$ by putting $[(V_1,E,V_2)]*[(V'_1,E',V'_2)]=$
$\sum_{\chi\in Hom(V_2,V'_1)}\sum_{F\in
M_\chi}\frac{|\Aut((V_2,\overline{F},V'_1))|}
{|\Aut_F((V_1,\overline{F},V'_2))|}[(V_1,\overline{F},V'_2)]$.
Continuing it by linearity, one obtains the binary operation
that makes of $B_n$ an algebra.

This operation has a simple geometrical meaning. The product $[(V_1,E,V_2)]*[(V'_1,E',V'_2)]$ is contributed by
the valency-preserving identifications of vertices from $V_2$ with those from $V'_1$. As a result of such identification,
there emerges ``a singular graph" with vertices $V_1\cup V'_2$ and edges intersecting on ``the set of singularities"
$V_2=V'_1$. The product is defined to be a linear combination of ``resolutions" of these singularities, i.e.
pairwise gluing of the edges coming into the common vertex from $(V_1,E,V_2)$ and
$(V'_1,E',V'_2)$, which induces a bipartite graph.

\vspace{1ex}

The sum $\mathfrak{e}_n^B= \sum\limits_{E\in\mathcal{E}_n}\frac{E}{|\Aut(E)|}$ over the set of all simple graphs
of degree $n$ is an identity element of the algebra $B_n$.
Besides, $(\beta_1,\beta_2)_{B}= l_{B}(\beta_1*\beta_2)$, where $l_{B}:B_n\rightarrow\mathbb{C}$ is
the linear functional equal to $\frac{1}{|\Aut(\Gamma)|}$ on the simple graphs $\Gamma$ and vanishing
on all other graphs.

\vspace{1ex}

In accordance with \cite{AN2} the Hurwitz number corresponding to the disk with marked boundary points is given by the
formula
$$H(D,\{\beta_1,\dots,\beta_k\})=l_{B}(\beta_1*\dots*\beta_k),\ \ \beta_1,\dots,\beta_k\in\mathcal{B}_n.$$

\vskip 0.6cm

We use the algebras $B_n$ of the bipartite graphs of fixed degree in order to construct an algebra on the vector space
$B_\infty$ generated by all bipartite graphs.

\vspace{1ex}

Introduce multiplication of the graphs $\Gamma_1,\Gamma_2$ of degree not greater than $n$:
$\Gamma_1*_n\Gamma_2=\rho^B_{n}(\Gamma_1)*\rho^B_{n}(\Gamma_2)$.

Define the binary operation on the vector space $B_\infty$ requiring that

$\Gamma_1\circ\Gamma_2=\sum\limits_ {n=\max\{|\Gamma_1|,|\Gamma_2|\}} ^{\infty}\{\Gamma_1\Gamma_2\}_{n}$,
where
$\{\Gamma_1\Gamma_2\}_n = \Gamma_1*_n
\Gamma_2$ at $n=\max\{|\Gamma_1|,|\Gamma_2|\}$
and
\noindent
$\{\Gamma_1\Gamma_2\}_n = \Gamma_1*_n
\Gamma_2-\sum\limits_ {k=\max\{|\Gamma_1|,|\Gamma_2|\}}^{n-1}
\rho^B_n(\{\Gamma_1\Gamma_2\}_k)$ at $n>\max\{|\Gamma_1|,|\Gamma_2|\}$.

\begin{lemma} \label{l3.2b} The algebra $B_\infty$ is associative and  $\{\Gamma_1\Gamma_2\}_n =0$ at
$n>|\Gamma_1|+|\Gamma_2|$.
\end{lemma}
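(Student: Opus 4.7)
The plan is to follow the template of Theorem~\ref{t3.1}, producing an algebra monomorphism $\rho^B_\uparrow:B_\infty\to B^{as}$, where $B^{as}=\{(b^1,b^2,\dots):b^n\in B_n\}$ is equipped with the componentwise product and $\rho^B_\uparrow(\Gamma)^n=\rho_n^B(\Gamma)$. Since each $B_n$ is associative and $\rho^B_\uparrow$ is injective (because $\rho_n^B(\Gamma)$ contains $\Gamma$ itself with coefficient $1$ when $n=|\Gamma|$), associativity of $B_\infty$ will be inherited automatically from the trivially associative algebra $B^{as}$. The two claims of the lemma are coupled: the very existence of $\rho^B_\uparrow$ as a homomorphism requires the sum defining $\Gamma_1\circ\Gamma_2$ to be finite, which is exactly the vanishing assertion.

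The main obstacle will be the \emph{stability identity}
\[
\rho_n^B(\Gamma_1)*\rho_n^B(\Gamma_2)=\rho_n^B\bigl(\rho_{n-1}^B(\Gamma_1)*\rho_{n-1}^B(\Gamma_2)\bigr)\qquad(n>|\Gamma_1|+|\Gamma_2|),
\]
which I would establish by a direct analysis of the explicit graph-theoretic product described in Section~\ref{s3.2}. A summand of $\rho_n^B(\Gamma_1)*\rho_n^B(\Gamma_2)$ is indexed by a pair of standard extensions $\tilde\Gamma_i$, a valency-preserving matching $\chi$ of the right-hand vertices of $\tilde\Gamma_1$ with the left-hand vertices of $\tilde\Gamma_2$, and a resolution $F\in M_\chi$. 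The added components $\tilde\Gamma_i\setminus\Gamma_i$ are isolated edges between two valency-$1$ vertices, and because $\chi$ preserves valencies, whenever $n>|\Gamma_1|+|\Gamma_2|$ at least one added simple edge of $\tilde\Gamma_1$ must be matched with an added simple edge of $\tilde\Gamma_2$; the unique resolution glues them into a simple component of the output. The delicate part will be to verify that peeling off one such matched pair (together with the simple component it produces) carries the weights $|\Aut(\tilde\Gamma_i)|/(|\Aut(\Gamma_i)||\Aut(\tilde\Gamma_i\setminus\Gamma_i)|)$ precisely into the weights defining $\rho_n^B$ applied to the level-$(n-1)$ product.

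With the stability identity in hand, the vanishing $\{\Gamma_1\Gamma_2\}_n=0$ for $n>|\Gamma_1|+|\Gamma_2|$ will follow by induction on $n$, using the transitivity $\rho_n^B\circ\rho_m^B=\rho_n^B$ of standard extensions (itself a routine coefficient check) to collapse the recursive definition; the base case $n=|\Gamma_1|+|\Gamma_2|+1$ is immediate from $\Gamma_1*_n\Gamma_2=\rho_n^B(\Gamma_1*_{n-1}\Gamma_2)$. The same stability, unwound for arbitrary $n$ exactly as in Theorem~\ref{t3.1}, yields $\rho_n^B(\Gamma_1\circ\Gamma_2)=\rho_n^B(\Gamma_1)*\rho_n^B(\Gamma_2)$ for all $n$, so $\rho^B_\uparrow$ is the desired monomorphism of algebras and associativity of $B_\infty$ follows.
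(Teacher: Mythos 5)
Your high-level plan is reasonable and is in the same spirit as the paper's (extremely terse) argument: the paper also derives associativity from the bracketing-independence of the iterated product and reduces the vanishing of $\{\Gamma_1\Gamma_2\}_n$ to a stabilization property of $\Gamma_1*_n\Gamma_2$ "following from the description of the product". Deducing associativity from the injective homomorphism into the componentwise-associative $B^{as}$ (essentially proving Theorem~\ref{t3.3} first) is also fine, since $\rho^B_\uparrow(\Gamma_1\circ\Gamma_2)^n=\Gamma_1*_n\Gamma_2$ holds by the very recursion defining $\{\Gamma_1\Gamma_2\}_n$, once finiteness is known.

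However, the two identities you propose as the engine of the induction are both false, so the argument as written collapses at the base case. The maps $\rho^B_n$ are \emph{not} transitive: for the one-edge graph $e$ one has $\rho_2^B(e)=2\,e^{\sqcup 2}$ and $\rho_3^B\bigl(\rho_2^B(e)\bigr)=6\,e^{\sqcup 3}$, whereas $\rho_3^B(e)=3\,e^{\sqcup 3}+(e\sqcup d)$, $d$ being the two-vertex component with two parallel edges (recall that a ``simple'' graph here may have multi-edge components, so the added pieces are not only valency-one edges --- this is a second inaccuracy in your description). The weight $|\Aut(\tilde\Gamma)|/(|\Aut(\Gamma)||\Aut(\tilde\Gamma\setminus\Gamma)|)$ counts unordered complements, and composing two extensions overcounts by the ways of splitting the complement into two ordered layers; the same failure occurs already in the Young-diagram model, where $\rho_3^A\rho_2^A([1])=6[1^3]\neq 3[1^3]=\rho_3^A([1])$. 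Consequently the stability identity $\rho_n^B(\Gamma_1)*\rho_n^B(\Gamma_2)=\rho_n^B(\rho_{n-1}^B(\Gamma_1)*\rho_{n-1}^B(\Gamma_2))$ also fails (in the $A$-analogue, $\rho_3^A([1])*\rho_3^A([1])=9[1^3]$ while $\rho_3^A\bigl(\rho_2^A([1])*\rho_2^A([1])\bigr)=12[1^3]$). The statement that actually has to be proved is the full multi-level relation $\Gamma_1*_n\Gamma_2=\sum_{k=\max\{|\Gamma_1|,|\Gamma_2|\}}^{n-1}\rho^B_n(\{\Gamma_1\Gamma_2\}_k)$ for $n>|\Gamma_1|+|\Gamma_2|$; it does not factor through a single application of $\rho^B_n$ to the level-$(n-1)$ product, so your induction cannot be ``collapsed'' and must instead track, in the graph-gluing description of $*$, how each configuration in which added components of $\tilde\Gamma_1$ and $\tilde\Gamma_2$ are glued to one another is distributed among the terms $\rho^B_n(\{\Gamma_1\Gamma_2\}_k)$ for \emph{all} $k\le|\Gamma_1|+|\Gamma_2|$.
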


\begin{proof}  It follows from the definitions that $\Gamma_1\circ\Gamma_2\circ\Gamma_3=\sum\limits_
{n=\max\{|\Gamma_1|,|\Gamma_2|,
|\Gamma_3|\}}^{\infty}\{\Gamma_1\Gamma_2\Gamma_3\}_{n}$,
\noindent
where $\{\Gamma_1\Gamma_2\Gamma_3\}_n\subset B_n$, $\{\Gamma_1\Gamma_2\Gamma_3\}_{n}$ being not changing
under any permutations of $\Gamma_i$. It follows associativity of the algebra $B_{\infty}$. The equality $ \Gamma_1*_n
\Gamma_2=\sum\limits_ {k=\max\{|\Gamma_1|,|\Gamma_2|\}}^{n-1}
\rho^B_n(\{\Gamma_1\Gamma_2\}_k)$ at $n>\max\{|\Gamma_1|,|\Gamma_2|\}$ follows from the description of the product $*$
above.
\end{proof}

\vskip 0.6cm

Define on the set of sequences $B^{as}=\{(b^{1},b^{2},\dots)|b^{n}\in B_n\}$,
a binary operation $(b^{1}_1,b^{2}_1,\dots)*(b^{1}_2,b^{2}_2,\dots)=
(b^{1}_1* b^{1}_2,b^{2}_1* b^{2}_2,\dots)$.

Associate with the element $b\in B_n$ the sequence of numbers $b^{as}= (b^{1},b^{2},\dots)\in B^{as}$, where
$b^{i}=0$ at $i<n$ and $b^{i}=\rho^B_{i}(a)$ at $i\geq n$.

\begin{theorem} \label{t3.3} The correspondence $b\mapsto b^{as}$ gives rise to the monomorphism of algebras
$\rho_{\uparrow}^B: B_\infty\rightarrow B^{as}$.
\end{theorem}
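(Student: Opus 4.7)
The plan is to mirror the proof of Theorem \ref{t3.1} in the bipartite-graph setting, exploiting the precise parallelism between the definitions of $\circ$ on $A_\infty$ and on $B_\infty$. Since everything is additive, it suffices to verify the monomorphism property on bipartite graphs as generators.

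First I would check that $\rho_{\uparrow}^B$ is a homomorphism. Fix bipartite graphs $\Gamma_1,\Gamma_2$ of degrees $m_1,m_2$ with $m=\max\{m_1,m_2\}$. On the one hand, the $n$-th component of $\Gamma_1^{as}*\Gamma_2^{as}$ is, by the definition of the componentwise product in $B^{as}$ and of $*_n$, equal to $\rho^B_n(\Gamma_1)*\rho^B_n(\Gamma_2)=\Gamma_1*_n\Gamma_2$ for $n\geq m$, and zero for $n<m$. On the other hand, $\Gamma_1\circ\Gamma_2=\sum_{k\geq m}\{\Gamma_1\Gamma_2\}_k$ (a finite sum by Lemma \ref{l3.2b}), and applying $\rho_{\uparrow}^B$ term by term, the $n$-th component becomes $\sum_{k=m}^{n}\rho^B_n(\{\Gamma_1\Gamma_2\}_k)$ for $n\geq m$ (and zero for $n<m$), where I would use the transitivity $\rho^B_n\circ\rho^B_k=\rho^B_n$ implicit in the definition of standard extensions. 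The two expressions coincide because the recursive definition of $\{\Gamma_1\Gamma_2\}_n$ rearranges exactly to the identity
$$\Gamma_1*_n\Gamma_2=\{\Gamma_1\Gamma_2\}_n+\sum_{k=m}^{n-1}\rho^B_n(\{\Gamma_1\Gamma_2\}_k)=\sum_{k=m}^{n}\rho^B_n(\{\Gamma_1\Gamma_2\}_k),$$
where the last equality uses $\rho^B_n(\{\Gamma_1\Gamma_2\}_n)=\{\Gamma_1\Gamma_2\}_n$.

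Next I would establish injectivity. Any nonzero $b\in B_\infty$ can be written uniquely as a finite linear combination of bipartite graphs; let $m$ be the minimum of their degrees and let $b_m\in B_m$ be the sum of terms of degree exactly $m$. The $m$-th component of $\rho_{\uparrow}^B(b)$ is then $b_m$, because $\rho^B_m$ applied to a graph of degree $>m$ yields zero (its lowest-degree extension has degree larger than $m$) while $\rho^B_m$ fixes graphs of degree $m$. Since $b_m\neq 0$ in $B_m$, the image $\rho_{\uparrow}^B(b)$ is nonzero, so the kernel is trivial.

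The only non-routine step is the componentwise identity in the homomorphism verification; but this is essentially forced by the recursive definition of $\{\Gamma_1\Gamma_2\}_n$ combined with $\rho^B_n\rho^B_k=\rho^B_n$, and it is the direct analogue of the identity used in the proof of Theorem \ref{t3.1}. So no new difficulty arises beyond what was already handled in the symmetric-group case: the entire argument transports verbatim, the only substantive input from the graph setting being the associativity and finiteness of $\circ$ already secured by Lemma \ref{l3.2b}.
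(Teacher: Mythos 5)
Your argument is correct and follows essentially the same route as the paper, whose entire proof consists of the reduction to graphs plus the rearranged recursion $\Gamma_1*_n\Gamma_2=\{\Gamma_1\Gamma_2\}_n+\sum_{k=\max\{|\Gamma_1|,|\Gamma_2|\}}^{n-1}\rho^B_n(\{\Gamma_1\Gamma_2\}_k)$; your lowest-degree-component argument for injectivity merely makes explicit what the paper leaves implicit.

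One caution, though. The ``transitivity'' $\rho^B_n\circ\rho^B_k=\rho^B_n$ that you invoke is not implicit in the definition of standard extensions and is in fact false, because of the automorphism normalizations in the coefficients: in the parallel Young-diagram setting one has $\rho^A_3([2])=[2,1]$ and $\rho^A_4([2,1])=2\,[2,1,1]$, whereas $\rho^A_4([2])=[2,1,1]$, so $\rho^A_4\rho^A_3\neq\rho^A_4$; the same failure occurs for $\rho^B$ already on the one-edge graph $\Gamma_0$, where $\rho^B_2(\Gamma_0)$ carries the factor $|\Aut(\Gamma_0\sqcup\Gamma_0)|=2$ and $\rho^B_3\rho^B_2(\Gamma_0)$ is a multiple of the three-edge simple graph only, while $\rho^B_3(\Gamma_0)$ also contains the extension by a double edge. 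Fortunately the step you wanted it for needs no such identity: for $\{\Gamma_1\Gamma_2\}_k\in B_k$, the $n$-th component of its image under $\rho^B_\uparrow$ is $\rho^B_n(\{\Gamma_1\Gamma_2\}_k)$ for $n\geq k$ and $0$ otherwise, directly by the definition of $b\mapsto b^{as}$; no composition of the maps $\rho^B$ ever arises, and summing over $m\leq k\leq n$ gives exactly your formula. With that remark deleted, your verification coincides with the paper's proof, and the rest (including $\rho^B_n|_{B_n}=\mathrm{id}$ and the injectivity step) is sound.
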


\begin{proof} One suffices to check the claim of the lemma for graphs. In this case, it follows from the equality
$\Gamma_1*_n
\Gamma_2 = \{\Gamma_1\Gamma_2\}_n + \sum\limits_ {k=\max\{|\Gamma_1|,|\Gamma_2|\}}^{n-1}
\rho^B_n(\{\Gamma_1\Gamma_2\}_k)$
\end{proof}

Thus, the algebra $B^{as}$ is isomorphic to the algebraic closure
of the algebra $B_\infty$, more precisely, to the algebraic closure
of the algebra $B_\uparrow=\rho_\uparrow(B_\infty)$.

Consider the linear functional  $l_{B}^{as}:  B^{as}\rightarrow\mathbb{C}^{\infty}$, where
$l_{B}^{as}(b^{1},b^{2},\dots)=(l_{B}(b^{1}),l_{B}(b^{2}),\dots)$ and the bilinear operator
$(b_1,b_2)_{B}: B^{as}\times B^{as}\rightarrow\mathbb{C}^{\infty}$
where $(b_1,b_2)_{B}=l_{B}^{as}(b_1* b_2)$.
Then, from theorem \ref{t3.3} it follows that

\begin{theorem} \label{t3.4} The multiplication in the algebra $B_{\uparrow}$ is defined by the equality
$$(b_1* b_2,b_3)_B=H^{as}(D,\{b_1,b_2,b_3\}).$$
Moreover,
$$H^{as}(D,\{b_1,\dots,b_k\})= l_{B}(b_1*\dots* b_k)$$
\end{theorem}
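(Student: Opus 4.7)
The plan is to follow the strategy used for Theorem \ref{t3.2}, transferring the degree-$n$ formula $H(D,\{\beta_1,\dots,\beta_k\})= l_B(\beta_1*\cdots*\beta_k)$ from section \ref{s3.2} to the sequence algebra $B^{as}$ componentwise, and then invoking Theorem \ref{t3.3} to conclude the result for the subalgebra $B_\uparrow$.

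First I would reduce to monomials. Writing $b_i=\rho^B_\uparrow(\beta_i)$ for some $\beta_i\in B_\infty$, I observe that the $n$-th component of $b_i$ equals $\rho^B_n(\beta_i)$ when $n\geq |\beta_i|$ and vanishes otherwise. Since multiplication in $B^{as}$ is componentwise, the $n$-th component of $b_1*\cdots *b_k$ is the product $\rho^B_n(\beta_1)*\cdots*\rho^B_n(\beta_k)$ computed inside $B_n$. Apply now $l^{as}_B$: the $n$-th entry becomes $l_B\bigl(\rho^B_n(\beta_1)*\cdots*\rho^B_n(\beta_k)\bigr)$, which by the disk Hurwitz formula from section \ref{s3.2} equals $H\bigl(D,\{\rho^B_n(\beta_1),\dots,\rho^B_n(\beta_k)\}\bigr)$, and this is precisely the free Hurwitz number $H^{fr}_n(D,\{\beta_1,\dots,\beta_k\})$ by definition. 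Assembling these entries as $n$ varies yields the asymptotic sequence $H^{as}(D,\{b_1,\dots,b_k\})$, which is the second displayed formula.

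The first identity $(b_1*b_2,b_3)_B=H^{as}(D,\{b_1,b_2,b_3\})$ then follows by specializing $k=3$ to $l^{as}_B(b_1*b_2*b_3)=l^{as}_B((b_1*b_2)*b_3)=(b_1*b_2,b_3)_B$, using the definition of the bilinear form $(\cdot,\cdot)_B$ on $B^{as}$ and the associativity of the componentwise product, which is inherited from the associativity of each $B_n$ (and reflects Lemma \ref{l3.2b} once we restrict to $B_\uparrow$ via the monomorphism of Theorem \ref{t3.3}).

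The only point that needs care is the compatibility at small $n$: when $n<\max_i |\beta_i|$, some components of $b_i$ are zero, and I must check that the corresponding free Hurwitz numbers also vanish, so that both sides of the identity agree term by term. This is immediate from the definition of $H^{fr}_n$ via $\rho^A_n,\rho^B_n$, which give zero whenever the argument has degree exceeding $n$ (there is no standard extension in that range). Beyond this bookkeeping, the proof is a direct transcription of the Young-diagram argument of Theorem \ref{t3.2} using the disk in place of the sphere, $B_n$ in place of $A_n$, $l_B$ in place of $l_A$, and Theorem \ref{t3.3} in place of Theorem \ref{t3.1}; the nontrivial combinatorial input (the geometric description of the product and Hurwitz formula on the disk) has already been established in section \ref{s3.2}.
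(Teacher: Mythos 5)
Your proposal is correct and is essentially the unpacking of the paper's own (largely implicit) argument: the paper simply asserts that Theorem \ref{t3.4} follows from Theorem \ref{t3.3}, i.e.\ from the monomorphism $\rho^B_\uparrow$ together with the fixed-degree disk formulas of Section \ref{s3.2} applied componentwise, which is exactly what you do. The componentwise reduction, the identification of the $n$-th entry with the free Hurwitz number $H^{fr}_n$, and the vanishing convention for $n$ below the degrees involved are all the intended steps.
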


\vspace{2ex}

\subsection {Cardy-Frobenius algebras}\label{s3.3}

Following \cite{AN}, \cite{AN2}, remind the definition of the (finite-dimensional) equipped Cardy-Frobenius algebra.

\vspace{1ex}

\textit{Frobenius pair} is a set $(C,l^C)$ that consists of a finite-dimensional associative algebra
with an identity element $C$ and a linear functional $l^C:C\rightarrow\mathbb{C}$ such that
the bilinear form $(c_1,c_2)_C= l^C(c_1c_2)$ induced by the functional is non-degenerate.

\textit{Casimir element} of the Frobenius pair $(C,l^C)$ is the element
$K_C=\sum\limits_{i=1}^{n}F^{ij}e_ie_j\in C$, where $\{e_1,\dots,e_n\}$ is the basis of the space $C$ and $\{F^{ij}\}$
is the matrix inverse to the matrix $(e_i,e_j)_C$.

\vspace{1ex}

For the Frobenius pairs $(A,l^A)$ $(B,l^B)$ and the linear operator $\phi: A\rightarrow B$ denote through
$\phi^*: B\rightarrow A$ the linear operator defined by the condition $(\phi^*(b),a)_A=(b,\phi(a))_A$.

\vspace{2ex}

\textit{Cardy-Frobenius algebra} is the set $((A,l^A),(B,l^B),\phi)$ which consists of

1) a commutative Frobenius pair $(A,l^A)$;

2) an arbitrary Frobenius pair $(B,l^B)$;

3) a homomorphism of algebras $\phi: A\rightarrow B$ such that the image $\phi(A)$ belongs to the center of the algebra
$B$ and $(\phi^*(b'),\phi^*(b''))_A =\tr K_{b'b''}$, where the operator $K_{b'b''}:B\rightarrow B$ is defined by
$K_{b'b''}(b)=b'bb''$.

\vspace{1ex}

\textit{Equipped Cardy-Frobenius algebra} is the set $((A,l^A),(B,l^B),\phi,U,\star)$ which consists of

1) the Cardy-Frobenius algebra $((A,l^A),(B,l^B),\phi)$;

2) involutive anti-automorphisms $\star :A\rightarrow A$ and
$\star : B\rightarrow B$ such that $l^A(x^\star)=l^A(x)$, $l^B(x^\star)=l^B(x)$, $\phi(x^\star)=\phi(x)^\star$;

3) the element $U\in A$ such that $U^2=K_A^\star$ and $\phi(U)=K_B^\star$.

The commutative Frobenius pairs are in one-to-one correspondence \cite{D} with closed topological field theories
in the sense of \cite{At}. It is classical and regular Hurwitz numbers that lead to these theories
\cite{D}.

The Cardy-Frobenius algebras are in one-to-one correspondence \cite{AN} with open-closed topological field theories.
An example of the Cardy-Frobenius algebras is, in particular, the algebras corresponding to open-closed string theory
\cite{Moore}, \cite{Laz}, \cite{MS}.

The equipped Cardy-Frobenius algebras are in one-to-one correspondence \cite{AN} with the Klein topological theories, i.e.
topological theories also determined on non-oriented surfaces \cite{AN}. These theories are associated with
the Hurwitz numbers of the seamed surfaces \cite{AN1}, \cite{AN2} and with the regular
Hurwitz numbers of the seamed surfaces \cite{AN3}.

Every real representations of the finite group induces a semi-simple equipped Cardy-Frobenius algebra \cite{LN}.
There exists a complete classification of the semi-simple equipped Cardy-Frobenius algebras \cite{AN}.

\vspace{2ex}

The above definitions require inverting matrices. Hence, their extension to the infinite-dimensional case requires
an additional care. We additionally demand that the algebras can be presented as direct (Cartesian) products of
finite-dimensional algebras
$A=\prod\limits_{\gamma\in\mathfrak{C}}A_\gamma$, $B=\prod\limits_{\gamma\in\mathfrak{C}}B_\gamma$ and, instead of functionals
on $A$ and $B$, we will consider the families of functionals
$l^A=\{l^A_\gamma:A_\gamma\rightarrow\mathbb{C}\}$, $l^B=\{l^B_\gamma:B_\gamma\rightarrow\mathbb{C}\}$ such that:

1)$(A_\gamma,l^A_\gamma)$ and $(B_\gamma,l^B_\gamma)$ are the Frobenius pairs ;

2)$\phi(A_\gamma)\in B_\gamma$ and the restrictions $\phi_\gamma$ of the homomorphism  $\phi$ onto $A_\gamma$
give rise to the Cardy-Frobenius algebras $((A_\gamma,l^A_\gamma),(B_\gamma,l^B_\gamma),\phi_\gamma)$;

3) The involution $\star$ preserves the subalgebras $A_\gamma$, $B_\gamma$ and, along with the projections
$U_\gamma$ of the element $U\in A$ onto $A_\gamma$, gives rise to the equipped Cardy-Frobenius algebras
$((A_\gamma,l^A_\gamma),(B_\gamma,l^B_\gamma),\phi_\gamma,U_\gamma,\star)$.

\vspace{1ex}

\subsection{Algebra of asymptotic Hurwitz numbers}\label{s3.4}

As was already noted, the sets $((A_n,l_A)$ and $(B_n,l_B)$ form Frobenius pairs. Besides, in
\cite{AN2} there was constructed the homomorphism
$\phi_n: A_n\rightarrow B_n$ and the element $U_n$ such that the set $((A_n,l_A),(B_n,l_B),\phi_n,U_n,\star)$
is the equipped Cardy-Frobenius algebra at any $n$.

On the other hand, $A^{as}=\prod\limits_{\gamma=1}^\infty A_n$ and $B^{as}=\prod\limits_{\gamma=1}^\infty B_n$.
The families  $\{\phi_n\}$ and  $\{U_n\}$ give rise to the homomorphism $\phi^{as}: A^{as}\rightarrow B^{as}$ and the
element $U^{as}\in A^{as}$. Thus, the set $((A^{as},l_A^{as}),(B^{as},l_B^{as}),\phi^{as}, U^{as},\star)$ also
forms the equipped Cardy-Frobenius algebra.

In accordance with theorems \ref{t3.1} and \ref{t3.3} the algebras  $A^{as}$ and $B^{as}$
are isomorphic to the algebraic closures of the algebras $A_\infty$ and $B_\infty$. Moreover, there is

\begin{theorem} If $a\in A_n$, then $\phi^{as}\rho_{\uparrow}^A(a)=\rho^B_{\uparrow}\phi_n(a)$.
\end{theorem}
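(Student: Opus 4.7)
The plan is to unwind both sides componentwise and match them via the Frobenius pairing. Since $A^{as}=\prod_i A_i$ and $B^{as}=\prod_i B_i$, and $\phi^{as}$ acts coordinatewise as $\phi_i$ on the $i$-th factor, the theorem is equivalent to the collection of identities
$$\phi_i(\rho_i^A(a))=\rho_i^B(\phi_n(a)),\qquad i\geq n,$$
with both sides vanishing trivially for $i<n$. By linearity, I would reduce to the case $a=\Delta$ a Young diagram of degree $n$.

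Next, I would invoke the non-degeneracy of $(\cdot,\cdot)_B$ on each $B_i$ to pass to pairings: it suffices to verify
$$(\phi_i(\rho_i^A(\Delta)),\gamma)_B=(\rho_i^B(\phi_n(\Delta)),\gamma)_B \quad\text{for all }\gamma\in\mathcal{B}_i.$$
The defining property of the Cardy--Frobenius homomorphism $\phi_i$ from \cite{AN2} is that its pairing with any $\gamma\in B_i$ computes the disk Hurwitz number with one internal and one boundary marked point. Hence the left-hand side equals $H(D,\rho_i^A(\Delta),\gamma)$, which is (by the definitions of Subsection 2.4) the free Hurwitz number $H^{fr}_i(D,\Delta,\gamma)$, using that $\rho_i^B(\gamma)=\gamma$ since $\gamma$ already has degree $i$.

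I would then unwind the right-hand side. Expanding $\rho_i^B(\phi_n(\Delta))$ as the sum over standard extensions $\tilde\Gamma\in\mathcal{E}_i(\phi_n(\Delta))$ with their explicit automorphism weights, and using $(\tilde\Gamma,\gamma)_B=\delta_{\tilde\Gamma,\gamma^\star}/|\Aut(\tilde\Gamma)|$, the identity reduces to: $H^{fr}_i(D,\Delta,\gamma)$ is nonzero precisely when $\gamma^\star$ splits as a disjoint union of $\phi_n(\Delta)$ and a simple graph of degree $i-n$, and in that case its value equals $1/\bigl(|\Aut(\phi_n(\Delta))|\cdot|\Aut(\gamma^\star\setminus\phi_n(\Delta))|\bigr)$. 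This is exactly what the coefficients in $\rho_i^B$ are designed to produce.

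The main obstacle, and the geometric core of the argument, is a factorization of coverings: a degree-$i$ covering of the disk whose internal critical value has profile $\rho_i^A(\Delta)=c\cdot(\Delta\sqcup [1]^{i-n})$ must decompose into a degree-$n$ subcovering carrying the ramification $\Delta$ plus $i-n$ disjoint unramified sheets. On the boundary, the extra sheets contribute precisely the simple (two-vertex) components to the bipartite graph, while the essential sheets produce $\phi_n(\Delta)$. Once this decomposition is set up, the remaining task is the bookkeeping that matches the single binomial factor $\binom{k_1+i-n}{i-n}$ sitting in $\rho_i^A(\Delta)$ against the weighted sum defining $\rho_i^B(\phi_n(\Delta))$; this is an orbit--stabilizer calculation using that $|\Aut(\mathrm{covering})|$ factors through the split and that the disk Hurwitz count is multiplicative on disjoint unions. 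I expect this combinatorial matching to be the only delicate step, the rest being a direct application of the Cardy--Frobenius machinery of Subsection 3.1--3.2.
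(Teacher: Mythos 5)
Your proposal is correct and follows essentially the same route as the paper: both reduce the statement to the pairing characterization $H(D,\phi_n(a),b)=H(D,a,b)$ of $\phi_n$ together with non-degeneracy of $(\cdot,\cdot)_B$, and then verify that $\rho^A$ and $\rho^B$ are compatible with disk Hurwitz numbers. Your write-up is in fact more explicit than the paper's two-line argument, which simply asserts that the key identity $H(D,\rho^B_{n}(\phi_n(a)),b')=H(D,\rho_{n}^A(a),b')$ ``follows from the definition of the Hurwitz numbers''; the covering factorization into a degree-$n$ part plus unramified sheets and the automorphism bookkeeping you describe are precisely what that assertion hides.
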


\begin{proof} The theorem is equivalent to the relation
$\phi_{n+1}\rho_{n}^A(a)=\rho^B_{n}\phi_n(a)$, where $\phi_n(a)$ is defined in accordance with \cite{AN2}, by the
relation $H(D,\phi_n(a),b)= H(D,a,b)$ for all $b\in B_n$. On the other hand, it follows from the definition of the Hurwitz numbers
that $H(D,\rho^B_{n}(\phi_n(a)),b')= H(D,\rho_{n}^A(a),b')$ for all $b'\in B_{n+1}$, if $H(D,\phi_n(a),b)= H(D,a,b)$ for
all $b\in B_n$.
\end{proof}

Thus,
\begin{corollary} The algebraic closure of the structure $((A_\infty,l_A),(B_\infty,l_B),\{\phi_n\},\{U_n\},\star)$
forms the equipped Cardy-Frobenius algebra.
\end{corollary}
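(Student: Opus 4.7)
The plan is to observe that the corollary is a packaging of the theorem immediately preceding it, together with the constructions already carried out in \S\ref{s3.4} and the definition of an infinite-dimensional equipped Cardy--Frobenius algebra given at the end of \S\ref{s3.3}. The preceding paragraph of \S\ref{s3.4} has already assembled $A^{as}=\prod_{n=1}^{\infty}A_n$, $B^{as}=\prod_{n=1}^{\infty}B_n$, the homomorphism $\phi^{as}=\{\phi_n\}$, the element $U^{as}=\{U_n\}$, and the involution $\star$ acting componentwise, and has recorded from \cite{AN2} that each $((A_n,l_A),(B_n,l_B),\phi_n,U_n,\star)$ is a (finite-dimensional) equipped Cardy--Frobenius algebra.

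First I would verify that this componentwise structure fulfils the three requirements in the infinite-dimensional definition: (1) $(A_n,l_A)$ and $(B_n,l_B)$ are Frobenius pairs; (2) $\phi_n$ gives a Cardy--Frobenius triple at each level; (3) $\star$ together with $U_n$ promotes each level to an equipped Cardy--Frobenius algebra. All three facts are supplied level by level by \cite{AN2}, so the componentwise check immediately endows $((A^{as},l_A^{as}),(B^{as},l_B^{as}),\phi^{as},U^{as},\star)$ with the structure of an infinite-dimensional equipped Cardy--Frobenius algebra in the sense of \S\ref{s3.3}. Next, I would identify this product structure with the algebraic closure of $((A_\infty,l_A),(B_\infty,l_B),\{\phi_n\},\{U_n\},\star)$: Theorems \ref{t3.1} and \ref{t3.3} realise $A_\infty$ and $B_\infty$ as subalgebras $A_\uparrow\subset A^{as}$ and $B_\uparrow\subset B^{as}$ whose algebraic closures are precisely the full product algebras. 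The theorem just proved, $\phi^{as}\rho^A_\uparrow=\rho^B_\uparrow\phi_n$, ensures that $\phi^{as}$ restricts compatibly to the subalgebras $A_\uparrow$ and $B_\uparrow$ and so extends consistently to the closures, agreeing with the componentwise $\{\phi_n\}$. For the involution $\star$, compatibility with $\rho^A_n$ and $\rho^B_n$ is immediate because $\star$ preserves the degrees of Young diagrams and of bipartite graphs.

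The step I expect to be the main (and essentially the only) nontrivial obstacle is ensuring that the element $U^{as}=\{U_n\}$ sits inside the closure of $\rho^A_\uparrow(A_\infty)$ in the precise sense demanded by the infinite-dimensional definition, and that the identities $(U^{as})^2=K_{A^{as}}^\star$ and $\phi^{as}(U^{as})=K_{B^{as}}^\star$ are interpreted componentwise. This requires an argument parallel to the preceding theorem: one verifies that the sequence $\{U_n\}$ satisfies a compatibility identity with $\rho^A_n$ analogous to the one satisfied by $\{\phi_n\}$, so that $U^{as}$ is the image of a well-defined element of the algebraic closure of $A_\uparrow$. The verification of the Casimir relations then reduces at each level to the corresponding finite-dimensional identity from \cite{AN2}, and the corollary follows by assembling the componentwise verifications.
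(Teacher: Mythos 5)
Your proposal matches the paper's (implicit) argument: the corollary is stated as an immediate consequence of the preceding theorem together with the componentwise construction of $((A^{as},l_A^{as}),(B^{as},l_B^{as}),\phi^{as},U^{as},\star)$ and the identifications of $A^{as}$, $B^{as}$ with the algebraic closures of $A_\infty$, $B_\infty$ from Theorems \ref{t3.1} and \ref{t3.3}. Your extra worry about $U^{as}$ is harmless but moot, since the algebraic closure is the full product and the infinite-dimensional definition in \S\ref{s3.3} only requires the projections $U_\gamma$ to work level by level, which \cite{AN2} supplies.
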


\section{Differential equations for generating functions} \label{s4}

\vspace{2ex}

\subsection{Cut-and-join operators}

Now we construct a representation of the algebras $A$ and $B$ as algebras of differential operators on the space of
functions of infinitely many variables $\{X_{ij}|i,j=1,\dots,\}$ and express in these terms the map
$\phi$.

The algebra $A$ is realized by the algebra of the cut-and-join operators $W(\Delta)$ \cite{MMN1},\cite{MMN2},
\cite{AMMN}.

Remind their construction. We need differential operators of the form
$D_{ab}=\sum_{e=1}^\infty X_{ae}\frac{\partial}{\partial X_{be}}$. Associate with the Young diagram
$\Delta=[\mu_1,\mu_2,\dots,\mu_k]$ with the lines of lengths $\mu_1\geq\mu_2\geq\dots\geq\mu_k$ the numbers
$m_j=m_j(\Delta)= |\{i|\mu_i=j\}|$ and $\kappa(\Delta)= (|\Aut(\Delta)|)^{-1} =(\prod\limits_{j}m_j!j^{m_j})^{-1}$.
Associate with the Young diagram $\Delta$ \textit{cut-and-join operator}
$W(\Delta)=\kappa(\Delta):\prod\limits_{j}(\tr D^j)^{m_j}:$,
where $D$ is the infinite-dimensional matrix with elements
$D_{ab}=\sum_{e=1}^\infty X_{ae}\frac{\partial}{\partial X_{be}}$ and $:...:$ denotes the normal ordering, when
all derivatives are placed to the right of all $X_{ab}$ in the product.
The product of operators is denoted by $\circ$. Denote through $W_\infty$ the algebra induced by the operators
$W(\Delta)$.

\begin{theorem} \cite{MMN2} The correspondence $\Delta\mapsto W(\Delta)$ states the isomorphism
$\varphi^A:A_\infty\rightarrow W$.
\end{theorem}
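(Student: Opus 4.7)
The plan is to reduce the statement to the already-established identities at each finite level $n$ and transfer them to $A_\infty$ via the maps $\rho_n^A$. The key observation is that $D_{ab}=\sum_e X_{ae}\partial/\partial X_{be}$ preserves the total degree in the variables $X_{ij}$, and $W(\Delta)$ contains exactly $|\Delta|$ factors of $X$ and $|\Delta|$ derivatives; hence each $W(\Delta)$ acts by a degree-preserving endomorphism of the space $P_n$ of polynomials of degree $n$, and vanishes on $P_n$ when $n<|\Delta|$. Thus $W$ is determined by its restrictions $W(\Delta)|_{P_n}$ for all $n\geq |\Delta|$.

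Next I would identify, for each $n$, the operator $W(\Delta)|_{P_n}$ with the class-sum element $\rho_n^A(\Delta)\in A_n$ acting on $P_n$ through the standard $S_n$-action coming from Schur--Weyl duality on the space $\bigotimes^n V$ with $V=\langle X_{1e}\rangle_e$ (or, equivalently, by diagonalising both sides on Schur polynomials $s_R$ with $R\vdash n$ and matching eigenvalues, which are the normalised central characters $|\Delta|!\,\chi^R(\rho_n^A(\Delta))/\dim V_R$). This is the main technical step and the principal obstacle: one must verify carefully that the normal-ordering prescription together with the combinatorial factor $\kappa(\Delta)$ exactly reproduces the coefficients in the class sum $\rho_n^A(\Delta)$; this is a standard but delicate bookkeeping computation going back to \cite{MMN1,MMN2}.

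Granting Step~2, the homomorphism property follows from the formula proved earlier in the paper: for $\Delta_1,\Delta_2\in A_\infty$ one has $\rho_n^A(\Delta_1\circ\Delta_2)=\rho_n^A(\Delta_1)*\rho_n^A(\Delta_2)$ by construction of $\circ$ via the recursive definition of $\{\Delta_1\Delta_2\}_n$. Therefore
\[
W(\Delta_1\circ\Delta_2)|_{P_n}=\rho_n^A(\Delta_1)*\rho_n^A(\Delta_2)=\bigl(W(\Delta_1)\circ W(\Delta_2)\bigr)|_{P_n}
\]
for every $n$, and since the two operators agree on each $P_n$ they are equal, so $\varphi^A$ is an algebra homomorphism. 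For injectivity, suppose $\sum_\Delta c_\Delta W(\Delta)=0$; restricting to $P_n$ for $n$ larger than every $|\Delta|$ occurring with $c_\Delta\neq 0$ gives $\sum_\Delta c_\Delta\,\rho_n^A(\Delta)=0$ in $A_n$, and since the diagrams $\rho_n^A(\Delta)$ are distinct partitions of $n$, the corresponding class sums are linearly independent, forcing all $c_\Delta=0$. Surjectivity onto $W$ is immediate from the definition of $W$ as the algebra generated by the $W(\Delta)$, so $\varphi^A$ is the claimed isomorphism.
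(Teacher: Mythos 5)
The paper does not actually prove this statement---it is quoted from \cite{MMN2}---so your sketch cannot be compared with an internal argument; judged on its own, its overall strategy (restrict $W(\Delta)$ to each graded piece, identify the restriction with the class sum $\rho_n^A(\Delta)$ acting via Schur--Weyl duality or by matching eigenvalues on Schur functions, then transfer the product identity $\Delta_1*_n\Delta_2=\{\Delta_1\Delta_2\}_n+\sum_k\rho_n^A(\{\Delta_1\Delta_2\}_k)$) is indeed the route of \cite{MMN1,MMN2}, and deferring the normal-ordering/$\kappa(\Delta)$ bookkeeping to those references is no worse than what the paper itself does.

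There is, however, a genuine error in your injectivity argument. You fix a single $n$ larger than every $|\Delta|$ with $c_\Delta\neq 0$ and claim that the elements $\rho_n^A(\Delta)$ are ``distinct partitions of $n$'', hence linearly independent. They are not: $\rho_n^A(\Delta)$ is a positive multiple of the standard extension $\tilde\Delta$ obtained by appending unit rows, so any two diagrams differing only by rows of length one have proportional images. For instance $\rho_4^A([2])$ and $\rho_4^A([2,1])$ are both multiples of $[2,1,1]$, so from $\sum_\Delta c_\Delta\rho_4^A(\Delta)=0$ one gets only a linear relation among the coefficients, not their vanishing; a single large $n$ can never separate $\Delta$ from $\Delta\cup[1]$. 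The conclusion is still true, but the argument must use several degrees: take $n_0$ to be the \emph{minimal} degree occurring with $c_\Delta\neq 0$; on $P_{n_0}$ every $W(\Delta)$ with $|\Delta|>n_0$ vanishes (it carries more derivatives than the degree), while the diagrams with $|\Delta|=n_0$ act as genuinely distinct class sums, whose linear independence (together with faithfulness of the $A_{n_0}$-action, which itself needs $N\geq n_0$ in the regularized setting) forces their coefficients to vanish; then iterate upward in degree. This degree-by-degree separation is exactly the mechanism behind the paper's Theorem~\ref{t3.1}, and your proof should be rephrased in that triangular form rather than relying on one large $n$.
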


\vspace{1ex}

\subsection{Graph-operators}

Associate with the monomial $x=X_{a_1b_1}\dots X_{a_nb_n}$ of degree $n$ the bipartite graph $\Gamma(x)$ with edges
$\{E_1,\dots,E_m\}$, where the edges $E_i$ and $E_j$ have common left (accordingly, right) vertex iff
$a_i=a_j$ (accordingly, $b_i=b_j$). Now associate with the graph $\Gamma$ \textit{graph-variable}
$X_{\Gamma}=\frac{1}{|\Aut(\Gamma)|}\sum X$,
where the sum goes over all monomials $x$ such that $\Gamma(x)=\Gamma$. Denote through $X_n$
the vector space generated by the graph-variables of degree $n$.

Associate with the operator $D= :D_{a_1b_1}\dots D_{a_nb_n}:$ the bipartite graph $\Gamma(\mathcal{D})$ with
the edges $\{E_1,\dots,E_m\}$, where the edges $E_i$ and $E_j$ have common left (accordingly, right) vertex iff
$a_i=a_j$ (accordingly, $b_i=b_j$). Now associate with the graph $\Gamma$ the operator
$V[\Gamma]= \frac{1}{|\Aut(\Gamma)|}\sum\mathcal{D}$,
where the sum goes over all operators
$\mathcal{D}$ such that $\Gamma(D)=\Gamma$. We call such operators
\textit{graph-operators}.

\vspace{1ex}

Define action of the graph-operators of degree $n$ on the graph-variables of the same degree. The usual action
of the graph-operators on the graph-variables turns out to be as a linear combination of graph-variables with infinite
coefficients. Hence, to define a correct differentiation we need to introduce some regularization. To this end,
consider, along with the (full) graph-operator and graph-variable $V[\Gamma]$, $X_{[\Gamma']}$
the restricted graph-operator $V^N[\Gamma]$ and graph-variable  $X_{[\Gamma']}^N$ defined similarly to the
full ones, but with the infinite set of variables $\{X_{ij}|i,j= 1,\dots,\}$ replaced with the finite one
$\{X_{ij}|i,j= 1,\dots,N\}$.

Define the action of the graph-operator $V^N[\Gamma]$ on the graph-variable $X_{[\Gamma']}^N$
as the action of the usual differential operator multiplied
by $\frac{(N-|R(\Gamma)|)!}{N!}$. One can easily see that $V^N[\Gamma](X^N_{[\Gamma']})$
is a linear combination of the restricted graph-variables $X^N_{[\Gamma"]}$. Moreover, the coefficients of this
linear combination are the same at any $N>|E(\Gamma)|$.
Now define $V[\Gamma](X_{[\Gamma']})= \lim\limits_{N\rightarrow\infty} V^N[\Gamma](X^N_{[\Gamma']})$.
This operation is naturally continued to $|\Gamma|\ne |\Gamma'|$: $V[\Gamma](X_{[\Gamma']})=0$ at
$|\Gamma|>|\Gamma'|$ and $V[\Gamma](X_{[\Gamma']})= V[\rho_{|\Gamma'|}(\Gamma)](X_{[\Gamma']})$ at $|\Gamma|<|\Gamma'|$.

Denote through $V_\infty$ the algebra generated by the operators $V(\Gamma)$.
Define the operation $\circ$ on $V$ requiring that the operator $V[\Gamma_1]\circ V[\Gamma_2]$
acts on all the graph-variables $X[\Gamma]$ as $V[\Gamma_1](V[\Gamma_2](X[\Gamma]))$.

\vspace{1ex}

\begin{theorem} \cite{MMN3} The correspondence $\Gamma\mapsto V(\Gamma)$ establishes the isomorphism
$\varphi^B:B_\infty\rightarrow V$.
\end{theorem}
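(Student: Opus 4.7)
The plan is to split the argument into three stages: linearity and injectivity of $\varphi^B$, matching of products at fixed degree, and reduction of the infinite-dimensional composition to the finite-degree case via the regularization built into $V[\Gamma]$.

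First I would verify that $\varphi^B$ is a well-defined linear injection. The operator $V[\Gamma]$ is a symmetrized normal-ordered monomial in the $D_{ab}$, and distinct graphs $\Gamma$ give monomials whose left-right vertex incidence patterns differ; consequently distinct $\Gamma$ produce operators whose action on the basis $\{X_{[\Gamma']}\}$ is distinguishable. Concretely, applying $V[\Gamma]$ to $X_{[\Gamma\sqcup \mathfrak{e}^B_1]}$ and reading off the coefficient of the graph-variable $\mathfrak{e}^B_{|\Gamma|}$ recovers $\Gamma$ up to isomorphism, which gives linear independence of the $V[\Gamma]$ and hence injectivity.

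The heart of the proof is verifying the multiplication rule at a fixed degree. I would show by direct computation in the Weyl algebra that for every $\Gamma'$ of degree $n$,
$$V[\Gamma_1]\bigl(V[\Gamma_2](X_{[\Gamma']})\bigr) = V[\Gamma_1 *_n \Gamma_2](X_{[\Gamma']}).$$
The combinatorics on the operator side reduces to pairing each derivative $\partial/\partial X_{be}$ coming from $\Gamma_2$ with a variable $X_{ae}$ originating from $X_{[\Gamma']}$, then re-contracting with the $X$'s produced by $\Gamma_1$. After collecting by incidence, this is precisely a valency-preserving map $\chi\in\Hom(V_2,V'_1)$ identifying the right vertices of the first graph with the left vertices of the second, followed by a resolution $F\in M_\chi$ of the resulting singular graph --- the very combinatorial description of the product $*$ in $B_n$ given in Section \ref{s3.2}. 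The regularization factor $\frac{(N-|R(\Gamma)|)!}{N!}$ absorbs the overcounting coming from the alphabet of size $N$, which is exactly what is needed for the limit to exist and to match the automorphism weights of the graph product.

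To upgrade this to the infinite-dimensional statement, I would use the identity
$$\Gamma_1 *_n \Gamma_2 = \{\Gamma_1\Gamma_2\}_n + \sum_{k=\max\{|\Gamma_1|,|\Gamma_2|\}}^{n-1}\rho^B_n(\{\Gamma_1\Gamma_2\}_k)$$
recorded in the proof of Lemma \ref{l3.2b}. Acting on a graph-variable $X_{[\Gamma']}$ of degree $n$ and invoking the convention $V[\Gamma](X_{[\Gamma']}) = V[\rho^B_n(\Gamma)](X_{[\Gamma']})$ for $|\Gamma|<n$, the right-hand side becomes $\sum_{k\leq n} V[\{\Gamma_1\Gamma_2\}_k](X_{[\Gamma']})$, which is the action of $\varphi^B(\Gamma_1\circ\Gamma_2)$. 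Since every graph-variable has a finite degree, this equality of actions on a basis of the module of graph-variables yields $\varphi^B(\Gamma_1\circ\Gamma_2)=V[\Gamma_1]\circ V[\Gamma_2]$ as operators. Surjectivity is automatic, since by definition $V_\infty$ is generated by the $V[\Gamma]$.

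The main obstacle is the degree-$n$ matching in the middle step: the purely combinatorial multiplication $*$ on bipartite graphs, with its double sum over $\chi$ and $F\in M_\chi$ and the automorphism ratio $|\Aut((V_2,\overline F,V'_1))|/|\Aut_F((V_1,\overline F,V'_2))|$, must be shown to coincide term-by-term with the Wick contractions of normal-ordered products of $D_{ab}$. Once that bookkeeping is correct, the passage from $*_n$ to $\circ$ in $B_\infty$ is formal by virtue of the compatibility of graph-operators with $\rho^B_n$.
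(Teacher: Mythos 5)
The paper offers no proof of this statement to compare against: it is quoted verbatim from \cite{MMN3} (arXiv:1210.6955), so the only in-paper content is the citation. Judged on its own, your skeleton is the right one. In particular, the last stage --- acting on a graph-variable of fixed degree $n$, invoking the conventions $V[\Gamma](X_{[\Gamma']})=0$ for $|\Gamma|>n$ and $V[\Gamma](X_{[\Gamma']})=V[\rho^B_n(\Gamma)](X_{[\Gamma']})$ for $|\Gamma|<n$, and then using $\Gamma_1*_n\Gamma_2=\{\Gamma_1\Gamma_2\}_n+\sum_{k}\rho^B_n(\{\Gamma_1\Gamma_2\}_k)$ from Lemma \ref{l3.2b} --- is exactly the mechanism the paper itself uses to pass from fixed degree to $A_\infty$ and $B_\infty$ in Theorems \ref{t3.1} and \ref{t3.3}, and it is legitimate here because the operation $\circ$ on $V$ is \emph{defined} by the action on graph-variables, so equality of actions is equality in $V$.

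The genuine gap is the one you flag yourself: the fixed-degree identity $V[\Gamma_1]\bigl(V[\Gamma_2](X_{[\Gamma']})\bigr)=V[\Gamma_1*_n\Gamma_2](X_{[\Gamma']})$ is asserted rather than proved, and it is essentially the entire content of the theorem --- everything else in your argument is formal bookkeeping around it. Two pieces of that computation cannot be waved through. First, the Wick-contraction count must be shown to organize into the double sum over $\chi\in\Hom(V_2,V'_1)$ and $F\in M_\chi$ with \emph{exactly} the weight $|\Aut((V_2,\overline{F},V'_1))|\,/\,|\Aut_F((V_1,\overline{F},V'_2))|$ once the normalizations $1/|\Aut(\Gamma_1)|$, $1/|\Aut(\Gamma_2)|$ and $1/|\Aut(\Gamma')|$ built into the graph-operators and graph-variables are taken into account; matching these orbit-stabilizer factors is where such arguments usually go wrong, and nothing in your sketch pins them down. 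Second, you must check that the regularization is compatible with composition: the factor $\frac{(N-|R(\Gamma)|)!}{N!}$ appears once in each of $V^N[\Gamma_1]$ and $V^N[\Gamma_2]$, and one has to verify that the free summation indices generated by the first action are exactly compensated so that the composite stabilizes for large $N$ to the \emph{single} regularized action of $V[\Gamma_1*_n\Gamma_2]$, with no residual factor of $N$. Your injectivity test (reading off a coefficient of $V[\Gamma]$ applied to $X_{[\Gamma\sqcup\mathfrak{e}^B_1]}$) is also looser than needed --- linear independence of the $V[\Gamma]$ within a fixed degree already follows from the disjointness of the sets of normal-ordered monomials $:D_{a_1b_1}\cdots D_{a_nb_n}:$ they comprise --- but that is a cosmetic point; the substantive missing step is the degree-$n$ contraction identity.
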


\vspace{1ex}

The cut-and-join  operators act on the space of graph-variables by the usual differentiation.
Define the homomorphism of algebras $f: W \rightarrow V$ requiring that the operator $f(w)$
acts on all the graph-variables as the operator $w$ (we prove below that such an operator exists).

\vspace{1ex}

\begin{theorem}\cite{MMN3} $f\varphi^A=\varphi^B\phi$
\end{theorem}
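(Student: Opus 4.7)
The plan is to exploit the fact that, by the preceding theorem, $\varphi^B$ is an isomorphism, so that $f\varphi^A=\varphi^B\phi$ reduces to the claim that for every Young diagram $\Delta$ the cut-and-join operator $W(\Delta)=\varphi^A(\Delta)$ and the graph-operator $\varphi^B(\phi(\Delta))=V(\phi(\Delta))$ produce identical results when applied to every graph-variable $X_\Gamma$. Since $\varphi^B$ is a bijection onto $V$, this also exhibits $f(W(\Delta))$ as an honest element of $V$, confirming the well-definedness of $f$ promised just before the theorem.

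To establish the equality of actions I would expand $W(\Delta)\,X_\Gamma$ and $V(\phi(\Delta))\,X_\Gamma$ in the graph-variable basis $\{X_{\Gamma'}\}$ and match coefficients. On the $V$-side this is essentially immediate from the previous theorem and from Section~\ref{s3.2}: the action of $V(\beta)$ on $X_\Gamma$ reflects multiplication by $\beta$ in $B_{|\Gamma|}$ (after the degree lift $\rho^B_{|\Gamma|}$), so that the coefficient of $X_{\Gamma'}$ in $V(\phi(\Delta))\,X_\Gamma$ is read off from $l_B\bigl(\phi(\Delta)*\Gamma*(\Gamma')^\star\bigr)$. By the disk Hurwitz formula of Section~\ref{s3.2} this equals $H(D,\phi(\Delta),\Gamma,(\Gamma')^\star)$, and by the defining property of $\phi_n$ recalled in the previous theorem of Section~\ref{s3.4} (namely $H(D,\phi_n(a),b)=H(D,a,b)$ for all $b\in B_n$) this further collapses to $H(D,\Delta,\Gamma,(\Gamma')^\star)$.

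The core of the proof, and the main obstacle, is the matching calculation on the $W$-side: showing that the corresponding coefficient in $W(\Delta)\,X_\Gamma$ is \emph{also} $H(D,\Delta,\Gamma,(\Gamma')^\star)$. Unfolding $X_\Gamma=|\Aut(\Gamma)|^{-1}\sum X$ and $W(\Delta)=\kappa(\Delta):\prod_j(\tr D^j)^{m_j}:$, this reduces to putting the sum over all contractions of the derivative indices of $W(\Delta)$ with the index monomials of $X_\Gamma$ in explicit bijection with the monodromy data of disk coverings having internal critical value of type $\Delta$ and boundary critical values of types $\Gamma$ and $(\Gamma')^\star$, weighted by $1/|\Aut(f)|$. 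This extends to disks-with-boundary the classical bijection that identifies $\tr D^j$-monomials with closed-surface Hurwitz numbers used in \cite{MMN1,MMN2}. The delicate point is the automorphism bookkeeping: one must verify that $\kappa(\Delta)$, $|\Aut(\Gamma)|^{-1}$, $|\Aut(\Gamma')|^{-1}$ and the combinatorial multiplicity of contractions yielding the prescribed residual graph assemble into precisely the Hurwitz weight $1/|\Aut(f)|$. Once this is in place, chaining the two computations via the defining identity for $\phi$ yields the theorem.
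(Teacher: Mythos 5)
The paper does not actually prove this theorem in the text: it is quoted from \cite{MMN3}, so there is no in-paper argument to compare yours against. Your overall strategy --- use the injectivity of $\varphi^B$ and the definition of $f$ by its action on graph-variables to reduce to showing that $W(\Delta)$ and $V(\phi(\Delta))$ have the same matrix elements in the basis $\{X_{\Gamma'}\}$, and then identify both sets of matrix elements with the disk Hurwitz numbers $H(D,\Delta,\Gamma,(\Gamma')^\star)$ --- is the natural one and is plausibly close to what \cite{MMN3} does.

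As written, however, the proposal has a genuine gap, in fact two. First, on the $V$-side you assert that the coefficient of $X_{\Gamma'}$ in $V(\phi(\Delta))\,X_\Gamma$ is read off from $l_{B}\bigl(\phi(\Delta)*\Gamma*(\Gamma')^\star\bigr)$ and call this ``essentially immediate from the previous theorem.'' It is not: that theorem identifies the ring $(V,\circ)$ with $(B_\infty,\circ)$, i.e.\ it concerns composition of graph-operators, whereas you need their \emph{module} action on graph-variables. The statement you are implicitly using --- that the space $X_n$ of graph-variables, with the $V$-action, realizes (up to $\star$ and normalization by $|\Aut(\cdot)|$) the regular $B_n$-module, so that matrix elements of $V(\beta)$ are triple correlators on the disk --- is itself a computation of the same order as the one you defer. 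Second, and decisively, the entire content of the theorem sits in the $W$-side computation, which you explicitly label ``the main obstacle'' and then do not perform: the bijection between normal-ordered contractions of $\prod_j(\tr D^j)^{m_j}$ against the monomials of $X_\Gamma$ and the monodromy data of disk coverings with one internal critical value of type $\Delta$, together with the verification that $\kappa(\Delta)$, $|\Aut(\Gamma)|^{-1}$ and the contraction multiplicities assemble into the weight $1/|\Aut(f)|$. Until that bijection and the automorphism bookkeeping are written out (including the degree-lift compatibility $\phi_{n+1}\rho^A_n=\rho^B_n\phi_n$ needed when $|\Delta|<|\Gamma|$), what you have is a credible plan rather than a proof.
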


\vspace{2ex}

\subsection{Generating function}

The cut-and-join operators are closely related to special generating functions of the classical Hurwitz numbers
\cite{MMN2}. We construct now the generating function of the Hurwitz numbers for the seamed surfaces which is
related to the graph-operators.

Associate with each Young diagram $\Delta$ and each bipartite graph $\Gamma$ formal variables
$\alpha_\Delta$ и $\beta_\Gamma$.

Fix at the boundary of the disk $D$ a point $q$ and associate with it a bipartite graph $\Gamma$.
Fix at the boundary of the disk pairwise distinct points $q_1,\dots,q_m$ and associate with them
bipartite graphs $\Gamma_1,\dots, \Gamma_m$, where $|\Gamma_i|\leq|\Gamma|$.
Fix pairwise distinct internal disk points $p_1,\dots,p_n$ and associate with them Young diagrams
$\Delta_1,\dots, \Delta_n$, where $|\Delta_i|\leq|\Gamma|$. Denote through
$<\Delta_1,\dots,\Delta_n|\Gamma_1,\dots,\Gamma_m\parallel \Gamma>$ the universal Hurwitz number corresponding to this set of data.
Put $<\Delta_1,\dots,\Delta_n|\Gamma_1,\dots,\Gamma_m\parallel \Gamma>=0$, if the degree of at least one Young diagram or
one graph from the set is larger than $|\Gamma|$.

Part the set of Young diagrams $\Delta_1,\dots,\Delta_n$ into the maximal groups of pairwise coinciding diagrams. Let
$n_1,\dots,n_k$ ($n_1+\dots+n_k=n$) be the numbers of elements in these groups.
Part the set of graphs $\Gamma_1,\dots,\Gamma_m$ into the maximal groups of pairwise coinciding graphs. Let
$m_1,\dots,m_l$ ($m_1+ \dots +m_l=m$) be the numbers of elements in these groups.

Associate with the set of data $(\Delta_1,\dots, \Delta_n|\Gamma_1,\dots,\Gamma_m\parallel\Gamma)$ the monomial
$<\Delta_1,\dots, \Delta_n|\Gamma_1,\dots,\Gamma_m\parallel\Gamma> \frac{\alpha_{\Delta_1}\dots\alpha_{\Delta_n}}
{n_1!\dots n_k!}\frac{\beta_{\Gamma_1}\dots\beta_{\Gamma_m}}
{m_1!\dots m_l!}X_\Gamma$, where $X_\Gamma$ is the graph-variable.

Denote through $Z(\{\alpha_{\Delta}\},\{\beta_{\Gamma}\}
|\{X_\Gamma\})$ the formal sum of all such monomials treated as a function of all variables of a kind of
$\alpha_{\Delta}$, $\beta_{\Gamma}$ and $X_\Gamma$.

Similarly fix now at the boundary of the disk $D$ two distinct points $q$, $q'$ and associate with them bipartite graphs
$\Gamma$, $\Gamma'$, where $|\Gamma'|\leq|\Gamma|$. Fix at the boundary of the disk pairwise distinct points $q_1,\dots,q_m$
lying outside the arc connecting the points $q$, $q'$ and associate with them
bipartite graphs $\Gamma_1,\dots, \Gamma_m$, where $|\Gamma_i|\leq|\Gamma|$.
Fix pairwise distinct internal disk points $p_1,\dots,p_n$ and associate with them Young diagrams
$\Delta_1,\dots, \Delta_n$, where $|\Delta_i|\leq|\Gamma|$.
Denote through $<\Delta_1,\dots, \Delta_n| \Gamma_1,\dots,\Gamma_m| \Gamma'\parallel\Gamma>$
the universal Hurwitz number corresponding to the set of data.
Put  $<\Delta_1,\dots, \Delta_n|\Gamma_1,\dots,\Gamma_m| \Gamma'\parallel\Gamma>=0$,
if the degree of at least one Young diagram or
one graph from the set is larger than $|\Gamma|$.

Part the set of Young diagrams $\Delta_1,\dots,\Delta_n$ into the maximal groups of pairwise coinciding diagrams. Let
$n_1,\dots,n_k$ ($n_1+\dots+n_k=n$) be the numbers of elements in these groups.
Part the set of graphs $\Gamma_1,\dots,\Gamma_m,\Gamma'$ into the maximal groups of pairwise coinciding graphs. Let
$m_1,\dots,m_l$ ($m_1+ \dots +m_l=m+1$) be the numbers of elements in these groups.

Associate with the set of data $(\Delta_1,\dots, \Delta_n|\Gamma_1,\dots,\Gamma_m| \Gamma'\parallel\Gamma)$ the monomial
$<\Delta_1,\dots, \Delta_n|\Gamma_1,\dots,\Gamma_m| \Gamma'\parallel\Gamma>\frac{\alpha_{\Delta_1}\dots\alpha_{\Delta_n}}
{n_1!\dots n_k!}\frac{\beta_{\Gamma_1}\dots\beta_{\Gamma_m}}
{m_1!\dots m_l!}X_\Gamma$, where $X_\Gamma$ is the graph-variable.

Denote through $Z_{\Gamma'}(\{\alpha_{\Delta}\},\{\beta_{\Gamma}\}
|\{X_\Gamma\})$ the formal sum of all such monomials treated as a function of all variables of a kind of
$\alpha_{\Delta}$, $\beta_{\Gamma}$ and $X_\Gamma$.

\begin{theorem}\label{t4.4} The functions $Z$ and  $Z_{\Gamma}$ satisfy the equations
$$\frac{\partial Z} {\partial\alpha_{\Delta}}=W(\Delta)Z$$
$$\frac{\partial Z_{\Gamma'}}{\partial\beta_{\Gamma'}} =V(\Gamma')Z$$
\end{theorem}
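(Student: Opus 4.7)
The plan is to match both sides coefficient-by-coefficient as formal power series in $\{\alpha_\Delta\}$ and $\{\beta_\Gamma\}$ with values in the graph-variables $X_\Gamma$. For the first equation, $\partial/\partial\alpha_\Delta$ acts on the symmetrized monomial $\alpha_{\Delta_1}\dots\alpha_{\Delta_n}/(n_1!\cdots n_k!)$ purely combinatorially: if $\Delta$ appears $n_j$ times among the $\Delta_i$, differentiation replaces $\alpha_\Delta^{n_j}/n_j!$ by $\alpha_\Delta^{n_j-1}/(n_j-1)!$. Consequently $\partial Z/\partial\alpha_\Delta$ has, as its coefficient of any residual monomial in $\{\alpha_\Delta\}$, $\{\beta_\Gamma\}$, $X_\Gamma$, precisely the universal Hurwitz number for the disk with one \emph{extra} distinguished internal marked point forced to carry the label $\Delta$.

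The key step is to identify this augmented Hurwitz number with the action of the cut-and-join operator $W(\Delta)$ on $X_\Gamma$. Geometrically, inserting one additional internal marked point labeled by $\Delta$ corresponds, via the Cardy-Frobenius structure assembled in Section~\ref{s3}, to multiplication by $\rho^A_{\uparrow}(\Delta)$ in the closed-sector algebra $A^{as}$; this is transported through the homomorphism $\phi^{as}$ to multiplication by $\rho^B_{\uparrow}(\phi(\Delta))$ on the boundary side (Theorem~\ref{t3.2} and the theorem preceding this section). Translating the resulting sum of boundary graphs into graph-variables and invoking the isomorphisms $\varphi^A:A_\infty\to W$, $\varphi^B:B_\infty\to V$ together with the intertwining relation $f\varphi^A=\varphi^B\phi$, one sees that this is exactly $W(\Delta)X_\Gamma$. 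Summing over $\Gamma$ gives $W(\Delta)Z$, which establishes the first equation.

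The second equation is proved in an entirely parallel fashion, one step simpler since no closed-to-open homomorphism is involved. Differentiating $Z_{\Gamma'}$ in $\beta_{\Gamma'}$ distinguishes one boundary insertion of type $\Gamma'$, and the disk product formula of Theorem~\ref{t3.4} identifies the corresponding augmented Hurwitz number with multiplication by $\rho^B_{\uparrow}(\Gamma')$ in $B^{as}$. By the isomorphism $\varphi^B$, this multiplication is implemented on $X_\Gamma$ by the graph-operator $V(\Gamma')$, which yields the second identity after summing over $\Gamma$.

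The main obstacle is bookkeeping rather than conceptual novelty: one has to track the factorials $n_1!\cdots n_k!$ and $m_1!\cdots m_l!$ normalizing the symmetrized monomials, the factors $|\Aut(\Gamma)|^{-1}$ hidden inside the graph-variables and inside the maps $\rho^A_n$, $\rho^B_n$, and the structure constants of the operators $W(\Delta)$ and $V(\Gamma')$. In particular one must verify that the asymmetric role of $\Gamma'$ in the definition of $Z_{\Gamma'}$ (it enters the factorial grouping but not the $\beta$-monomial) interacts correctly with $\partial/\partial\beta_{\Gamma'}$. All these combinatorial factors have been calibrated in Sections~\ref{s3}--\ref{s4} precisely so that the algebra products $*$ in $A^{as}$ and $B^{as}$ are faithfully reproduced by the action of $W$ and $V$ on graph-variables, so the verification is routine but requires care.
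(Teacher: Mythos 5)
Your proposal is correct and follows essentially the same route as the paper: reduce the differential equation to coefficient-wise identities among universal Hurwitz numbers (one extra marked point of type $\Delta$ or $\Gamma'$), then identify the augmented correlator with the action of $W(\Delta)$ or $V(\Gamma')$ on graph-variables via the Cardy--Frobenius/TFT structure of Section~\ref{s3} and the operator isomorphisms $\varphi^A$, $\varphi^B$, $f$. The paper merely phrases the key step more explicitly as the sewing relation $\langle\dots|\Gamma'\,\|\,\Gamma\rangle=\sum_{i,j}\langle\dots,\Gamma^i\rangle F^{ij}\langle\Gamma^j\,\|\,\rho^B_{|\Gamma|}(\Gamma')*\Gamma\rangle$ with $F^{ij}$ the inverse Gram matrix, citing \cite{AN2}, which is the same factorization you invoke in the language of multiplication in $A^{as}$ and $B^{as}$.
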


\begin{proof}The equality $\frac{\partial Z_{\Gamma'}}{\partial\beta_{\Gamma'}} =V(\Gamma')Z$ is equivalent to the
system of relations between the numbers $<\Delta_1,\dots,\Delta_n|\Gamma_1, \dots,\Gamma_m|
\Gamma'\parallel\Gamma>$ and $<\Delta_1,\dots,\Delta_n|\Gamma_1, \dots,\Gamma_m\parallel\Gamma>$, that is,
 $<\Delta_1,\dots,\Delta_n|\Gamma_1,\dots,\Gamma_m| \Gamma'\parallel\Gamma>=\sum\limits_{i=1}^k
 <\Delta_1,\dots,\Delta_n|\Gamma_1,\dots,\Gamma_m,\Gamma^i>F^{ij} <\Gamma^i\parallel\rho_{|\Gamma|}^B(\Gamma')*\Gamma>$,
where $\{\Gamma^i\}$ is the set of all bipartite graphs of degree $|\Gamma|$ and $F^{ij}$ is the matrix inverse to
$\{<\Gamma^i\parallel\Gamma^j>\}$.
These relations are proved in \cite{AN2} and mean that the Hurwitz numbers are correlators in open-closed topological
field theory.
The relation $\frac{\partial Z}{\partial\alpha_{\Delta}}=
W(\Delta)Z$ is proved analogously.
\end{proof}

\vspace{2ex}

Associate with each connected bipartite graph $\gamma$ a formal variable $q_\gamma$. Consider the algebra $Y$ generated by all
variables $q_\gamma$. The correspondence $q_\gamma\leftrightarrow X_\gamma$ allows one to interpret the arbitrary
graph-variable $X_\gamma$ as the monomial $q_{\gamma_1}\dots q_{\gamma_k}\in Y$, where $\gamma_1,\dots,\gamma_k$
are the connected components of the graph $\Gamma$. The generating functions $Z(\{\alpha_{\Delta}\},\{\beta_{\Gamma}\}
|\{X_\Gamma\})$ and $Z_{\Gamma'}(\{\alpha_{\Delta}\},\{\beta_{\Gamma}\}
|\{X_\Gamma\})$ then becomes generating functions $\mathcal{Z}(\{\alpha_{\Delta}\},\{\beta_{\Gamma}\}
|\{q_\gamma\})$ and $\mathcal{Z}_{\Gamma'}(\{\alpha_{\Delta}\}, \{\beta_{\Gamma}\}|\{q_\gamma\})$. The differential
operators $W(\Delta)$ and $V(\Gamma)$ acting on the space of graph-variables, after the replacement become
differential operators $\verb"W"(\Delta)$, $\verb"V"(\Gamma)$, which act on the algebra $Y$
of variables $\{q_\gamma\}$. Theorem \ref{t4.4} then becomes

\begin{theorem}\label{t4.5} The functions $\mathcal{Z}$ and  $\mathcal{Z}_{\Gamma}$ satisfy the equations
$$\frac{\partial \mathcal{Z}} {\partial\alpha_{\Delta}}=\verb"W"(\Delta)\mathcal{Z}$$
$$\frac{\partial \mathcal{Z}_{\Gamma'}}{\partial\beta_{\Gamma'}} =\verb"V"(\Gamma')\mathcal{Z}$$
\end{theorem}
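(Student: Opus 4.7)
The plan is to deduce Theorem \ref{t4.5} from Theorem \ref{t4.4} by transporting both sides along the algebra isomorphism induced by the correspondence $X_\gamma \leftrightarrow q_\gamma$. Since Theorem \ref{t4.4} is already stated as proved, the main task is to verify that this correspondence is well-defined on the full space of graph-variables, that it is compatible with the differential operators, and that the $\alpha$- and $\beta$-derivatives commute with the identification.

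First I would check that the rule sending a graph-variable $X_\Gamma$ to the monomial $q_{\gamma_1}\cdots q_{\gamma_k}$, where $\gamma_1,\dots,\gamma_k$ are the connected components of $\Gamma$, extends linearly to a well-defined isomorphism $\Psi$ from the space of all graph-variables onto the polynomial algebra $Y$. This is really a combinatorial bookkeeping claim: a bipartite graph is determined up to isomorphism by the multiset of isomorphism classes of its connected components, so the monomial on the right-hand side is independent of how one labels the components, and the map $\Psi$ is a bijection on a basis. Next I would observe that the operators $\verb"W"(\Delta)$ and $\verb"V"(\Gamma')$ are defined precisely so that the diagrams
\[
\Psi \circ W(\Delta) = \verb"W"(\Delta)\circ\Psi,\qquad \Psi\circ V(\Gamma') = \verb"V"(\Gamma')\circ\Psi
\]
commute; this is essentially their definition as the operators on $Y$ that reproduce the action of $W(\Delta)$ and $V(\Gamma')$ on graph-variables after the substitution.

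Then I would note that the generating functions $Z$ and $Z_{\Gamma'}$ are formal sums whose only dependence on the graph-variables occurs through the factor $X_\Gamma$ in each monomial. Therefore $\mathcal{Z} = \Psi(Z)$ and $\mathcal{Z}_{\Gamma'} = \Psi(Z_{\Gamma'})$, where $\Psi$ is applied coefficient-wise with respect to the formal variables $\alpha_\Delta,\beta_\Gamma$. Since $\Psi$ does not touch these formal variables, it commutes with $\partial/\partial\alpha_\Delta$ and $\partial/\partial\beta_{\Gamma'}$. Applying $\Psi$ to both sides of the identities of Theorem \ref{t4.4} and using the intertwining relations above yields
\[
\frac{\partial\mathcal{Z}}{\partial\alpha_\Delta} = \Psi\Bigl(\frac{\partial Z}{\partial\alpha_\Delta}\Bigr) = \Psi\bigl(W(\Delta)Z\bigr) = \verb"W"(\Delta)\Psi(Z) = \verb"W"(\Delta)\mathcal{Z},
\]
and analogously for $\partial\mathcal{Z}_{\Gamma'}/\partial\beta_{\Gamma'}$.

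The step I expect to require the most care is the intertwining property for the graph-operators $V(\Gamma')$, because the action of $V(\Gamma')$ on graph-variables is defined only through the regularized limit with variables restricted to an $N\times N$ block. One has to check that this regularized action descends correctly to multiplication-and-differentiation on the monomial $q_{\gamma_1}\cdots q_{\gamma_k}$; in particular one must verify that whenever the action of $V(\Gamma')$ disassembles $\Gamma$ into new connected components, the resulting expression on the $X$-side is exactly the image under $\Psi$ of the corresponding action on the $q$-side. Once this intertwining is in hand, the remaining arguments are routine, and the theorem is a direct consequence of Theorem \ref{t4.4}.
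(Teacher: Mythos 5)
Your proposal is correct and follows essentially the same route as the paper, which presents Theorem \ref{t4.5} as an immediate reformulation of Theorem \ref{t4.4} under the substitution $X_\gamma \leftrightarrow q_\gamma$, the operators $\verb"W"(\Delta)$ and $\verb"V"(\Gamma)$ being defined precisely as the transported versions of $W(\Delta)$ and $V(\Gamma)$ so that the intertwining relations hold by construction. You simply make explicit the transport argument that the paper leaves implicit.
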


In simplest cases related to coverings by the Klein surfaces \cite{Nat1} this claim is proved in
\cite{N3} by independent direct combinatorial calculations.

\end{document}